\declaretheorem{theorem}
\providecommand{\U}[1]{\protect\rule{.1in}{.1in}}
\newtheorem*{theorem*}{Theorem}
\DeclareMathOperator{\Tr}{Tr}
\newcommand{\outprod}[1]{\ket{#1}\!\!\bra{#1}}
\newcolumntype{P}[1]{>{\centering\arraybackslash}p{#1}}
\def\@email#1#2{%
 \endgroup
 \patchcmd{\titleblock@produce}
  {\frontmatter@RRAPformat}
  {\frontmatter@RRAPformat{\produce@RRAP{*#1\href{mailto:#2}{#2}}}\frontmatter@RRAPformat}
  {}{}
}%
\begin{document}

\preprint{AIP/123-QED}

\title{Exploring the possibility of a complex-valued non-Gaussianity measure for quantum states of light}
\author{Andrew J. Pizzimenti}
\affiliation{Wyant College of Optical Sciences, The University of Arizona, 1630 E. University Blvd., Tucson, AZ 85721, USA}
\email{ajpizzimenti@arizona.edu}
\thanks{Contributed equally to this work}
\author{Prajit Dhara}
\affiliation{Wyant College of Optical Sciences, The University of Arizona, 1630 E. University Blvd., Tucson, AZ 85721, USA}
\thanks{Contributed equally to this work}
\author{Zacharie Van Herstraeten}
\affiliation{Wyant College of Optical Sciences, The University of Arizona, 1630 E. University Blvd., Tucson, AZ 85721, USA}
\thanks{Contributed equally to this work}
\author{Sijie Cheng}
\affiliation{Department of Electrical and Computer Engineering, The University of Arizona, Tucson AZ, 85721, USA}
\author{Christos N. Gagatsos}
\affiliation{Department of Electrical and Computer Engineering, The University of Arizona, Tucson AZ, 85721, USA}
\thanks{Contributed equally to this work}

\date{\today}

\begin{abstract}
	We consider a quantity that is the differential relative entropy between a generic Wigner function and a Gaussian one. We prove that said quantity is minimized with respect to its Gaussian argument, if both Wigner functions in the argument of the Wigner differential entropy have the same first and second moments, i.e., if the Gaussian argument is the \emph{Gaussian associate} of the other, generic Wigner function. Therefore, we introduce the differential relative entropy between any Wigner function and its Gaussian associate and we examine its potential as a non-Gaussianity measure.  The proposed, phase-space based non-Gaussianity measure is complex-valued, with its imaginary part possessing the physical meaning of the Wigner function's negative volume. At the same time, the real part of this measure provides an extra layer of information, rendering the complex-valued quantity a measure of non-Gaussianity, instead of a quantity pertaining only to the negativity of the Wigner function. We prove that the measure (both the real and imaginary part) is faithful, invariant under Gaussian unitary operations, and find a sufficient condition on its monotonic behavior under Gaussian channels. We provide numerical results supporting aforesaid condition. \color{black} Additionally, we examine the measure's usefulness to non-Gaussian quantum state engineering with partial measurements.
\end{abstract}

\maketitle

\section{Introduction}
\label{sec:intro}
Non-Gaussian states of light have attracted the interest of the broad quantum information community as they possess properties which unlock quantum enhancement in a series of protocols. Particularly, quantum systems that do not exhibit negativity in their Wigner function descriptions, can be simulated efficiently by a classical computer~\cite{Mari2012} and it has been demonstrated that such states do not enable a quantum computational advantage~\cite{Veitch2013}. In the same context, non-Gaussian states are of fundamental interest for quantum computers' architectures, based on all-photonic cluster states~\cite{Pant2019}. Further, non-Gaussian states play a central role in quantum communications~\cite{Niset2009}, while they are useful for all-optical quantum repeaters~\cite{Pant2015,Lo2015,He2020}.

Past, recent, and contemporary works \cite{Andrew2021,Gagatsos2021,Cerf2005,Ourjoumtsev2009,Ra2017,Hiroki2008,Treps2016,Brouri2009,Marek2008,Barnett2018,Treps2018,Arzani2019,Treps2017,Ferrini2017,Su2019,Gagatsos2019} have studied the production of non-Gaussian states utilizing conditional partial measurements. In such schemes a subset of the modes of a multi-mode Gaussian state are projected on non-Gaussian states (for e.g. Fock states, which describe photon counting). This in turn projects the undetected modes into a non-Gaussian state (referred to as the conditional state), unless the conditional state is a result of partial projection on vacuum states. Typically, the task is to optimize the parameters of the Gaussian resource state and to identify the photon counting pattern(s), such that the generated non-Gaussian state is a good approximation, as quantified by some figure of merit, of a non-Gaussian target state.

Therefore, quantifying the non-Gaussian character of a state is essential. This task can be separated into two distinct tasks: To work within the context of a resource theory and establish a non-Gaussianity monotone, or to work within the context of a non-Gaussianity measure. Monotones and measures, are quantities that map a state $\rho$ to a number, i.e., the action of a non-Gaussianity monotone or measure $\mu[.]$ on the state $\rho$ is $\mu[\rho]=c \in \mathbb{C}$. Monotones and measures are thus functionals of the state (ergo our notation using square brackets), i.e., quantities that map the state (written on some basis) or a quasi-probability of the state to a complex number $c$. This number should be informative on the non-Gaussian character of the state.

To explain the difference between monotones and measures, let us take a few steps back. A general resource theory requires the definition of three things: the \emph{resource}, the operations that cannot increase the defined resource (typically called \emph{free operations}), and the states that do not possess the resource (typically referred to as \emph{free states}). In the context of a non-Gaussianity resource theory, one can define the resource to be the negativity of the Wigner function, i.e., Wigner-positive states, even if they are non-Gaussian, are considered free states. The other option is to define the overall non-Gaussianity of the state as a resource, where free states are all Gaussian states. For both approaches, the free operations are all Gaussian protocols: Gaussian unitary operations, partial trace, partial Gaussian measurements, tensor product with a Gaussian state. A proper monotone evaluated for any given state from both of the aforementioned approaches, should then remain invariant under Gaussian unitary operations, should not increase on average under partial trace, be non-increasing on average when part of the state is measured by projections on Gaussian states, and remain invariant when the state under question is composed (by a tensor-product) with a Gaussian state. Further properties of a monotone may include \textit{faithfulness}, i.e., the property that the monotone is equal to zero if and only if the state for which it is evaluated for is a free state. Important works considering the negativity of the Wigner function as a resource include Refs.~\cite{Albarelli2018,Zhuang2018}. The literature on resource theories considering all Gaussian states as free states, is scarce with notable exception Ref.~\cite{PhysRevA.98.022335}. Another successful parallel line of research has been defining experimentally friendly sufficient conditions for a state to be considered quantum non-Gaussian \cite{PhysRevLett.106.200401,PhysRevA.87.062104}. 
Non-Gaussianity \emph{measures} are mathematically more relaxed than their monotone counterparts, while still fully capturing and quantifying the non-Gaussian character of the state under examination.
Non-Gaussianity measures are generally expected to be faithful, invariant under Gaussian unitary operations, and non-increasing under Gaussian channels. The line of research yielded interesting results~\cite{Genoni2007,Genoni2008,Genoni2010,Park2021,Park2021,Chabaud2020Stellar} and in this work, we follow the same path.

The phase-space description of quantum states of light, and particularly the Wigner function, has enabled the detailed study of quantum light. It is well known that the Wigner function constitutes a complete description of a state. Furthermore, the Wigner function can reveal intrinsic properties of a quantum system pertaining to its non-Gaussian character, i.e., exhibiting negative volume, shape of positive and negative regions. Of course, examining the shape of Wigner function assumes that we study single-mode states, which correspond to three-dimensional plots of the corresponding Wigner function. This last remark, motivates further our pursue of a non-Gaussianity measure for generic, multi-dimensional state. Then, the question we pose is if we can use the Wigner function of a state to build a faithful non-Gaussianity measure. It is true that some of the measures already existing in the literature, can be worked out in phase-space. For example, the measure based on the Hilbert-Schmidt distance \cite{Genoni2007,Genoni2010}, essentially requires to calculate the trace of the product of two density operators; something that can be done efficiently in phase-space by integrating together the two corresponding Wigner functions (and taking into account the proper constant in front of said integral, see \cite{Ferraro2005}). Also, recent work has established phase-space based non-Gaussianity measures \cite{Park2021}. In this work, we exploit the recently introduced Wigner entropy \cite{Zach2021,hertzthesis, Cerf2023-gn} and we introduce a complex-valued non-Gaussianity measure whose imaginary part is the total negative volume of the Wigner function corresponding to the state under examination. The real part of the measure provides further information on the non-Gaussian character of the state, and both imaginary and real parts form a faithful non-Gaussianity measure. When the imaginary part of our measure is zero and the real part is non-zero, that is when the Wigner function of the state in question is positive but non-Gaussian, one can further distinguish between states that are just convex mixtures of Gaussian states and states that cannot be expressed as such by taking into account the purity of state.

This paper is organized as follows: In Sec.~\ref{sec:definition} we define the Wigner relative entropy, the Gaussian associate state, and our measure. In Sec.~\ref{sec:faithfulness} we prove that our measure is faithful, i.e., it is equal to zero if and only if the state under question is Gaussian. In Sec.~\ref{sec:gauss_invariance}, we prove that our measure remains invariant under Gaussian unitary operations. In Sec.~\ref{sec:gauss_channels} we give a sufficient condition on the expected behavior of the measure under Gaussian channels. We also provide numerical evidence that our measure is monotonic under Gaussian channels. In Secs. \ref{sec:minimum_value} and \ref{sec:apps}, we examine further properties. Namely, in Sec.~\ref{sec:minimum_value} we employ functional methods to minimize our measure. Section \ref{sec:apps} examines a few non-Gaussian states of interest and the potential of our measure as a computational tool to aid in quantum state engineering based on probabilistic, heralded schemes. Finally, in Sec.~\ref{sec:concl} we outline the main findings of this work and we discuss potential further research paths.

\section{Definition of the measure}
\label{sec:definition}
 In this work, we will only consider states with finite first and second moments such that all displacement vectors and covariance matrices are well defined. Gaussian states are uniquely defined by their displacement vector $\mathbf{d}$ and their covariance matrix $V$ (see for example \cite{Ferraro2005,Weedbrook2012}). Non-Gaussian $N-$mode states have a $2N-$dimensional displacement vector and $2N\times2N-$dimensional covariance matrix, whose definitions are identical to their Gaussian counterparts. That is, for every state $\rho$, the elements $d_i$ and $V_{ij}$ of the displacement vector and covariance matrix respectively, are defined as,
\begin{align}
d_i&=\mathrm{Tr}(\hat{r}_i\rho)\\
V_{ij}&=\frac{1}{2}\mathrm{Tr} \left(\{\hat{r}_i-d_i,\hat{r}_j-d_j\}\rho\right),
\end{align}
where $i,j=1,\ldots,2 N$.

We define the Gaussian associate of $\rho$, denoted as $\rho_G$, the state that has the same displacement vector and covariance matrix as $\rho$. State $\rho$ corresponds to a Wigner function $W(\mathbf{r})$, while state $\rho_G$ corresponds to the Gaussian Wigner function,
\begin{align}
\label{eq:GaussianWigner} W_G(\mathbf{r})=\frac{1}{(2\pi)^N \sqrt{\det \mathbf{V}}}
e^{-\frac{1}{2}(\mathbf{r}-\mathbf{d})^T \mathbf{V}^{-1}(\mathbf{r}-\mathbf{d})}
\end{align}
where $N$ is the number of modes, $\mathbf{r}=(\mathbf{q},\mathbf{p})$ are the coordinates' vector, such that $\mathbf{q}=(q_1,\ldots,q_N)$ is the vector of canonical position and $\mathbf{p}=(p_1,\ldots,p_N)$ is the vector of canonical momentum. We work in the $qqpp$ representation of the state and we set $\hbar=1$.  For the rest of this Section and Section \ref{sec:faithfulness} the barred quantities refer to any Gaussian state while the non-barred quantities refer to generic non-Gaussian states or their Gaussian associates when they also have an index $G$. \color{black}

Typically, a non-Gaussianity measure (henceforth denoted as nGM) for a state $\rho$, is a distance-like measure between the state under examination and its \emph{closest} Gaussian state, intuitively expected to be the state $\rho_G$. For example, the quantum relative entropy and a measure based on the Hilbert-Schmidt distance between $\rho$ and $\rho_G$ have been studied in Ref.~\cite{marian2013relative, Genoni2007,Genoni2008,Genoni2010}.

Using the quantum Wigner entropy, which is \cite{Zach2021,hertzthesis,Cerf2023-gn} defined as,
\begin{align}
\label{eq:WE}h[W]=-\int d^{2N} r W (\mathbf{r}) \ln W(\mathbf{r}),
\end{align}
we define the Wigner relative entropy (WRE) between a Wigner function $W(\mathbf{r})$ and some Gaussian Wigner function of a quantum state $ \overline{W}_G(\mathbf{r}) \color{black}$  as,
\begin{align}
\label{eq:WREtilde}D[W||\overline{W}_G\color{black}] = -h[W]-\int d^{2N} r W (\mathbf{r}) \ln  \overline{W}_G(\mathbf{r}) \color{black}
\end{align}

The Gaussian Wigner function $ \overline{W}_G(\mathbf{r}) \color{black}$ is arbitrary, i.e.\ ,
\begin{align}
    \label{eq:GaussianWignerTilde} 
     \overline{W}_G(\mathbf{r}) \color{black}=\frac{1}{(2\pi)^N \sqrt{\det  \overline{\mathbf{V}}}} \color{black}
    e^{-\frac{1}{2}(\mathbf{r}-\overline{\mathbf{d}} \color{black})^T \overline{\mathbf{V}}^{-1} \color{black}(\mathbf{r}-\overline{\mathbf{d}} \color{black})}
\end{align}
is not assumed to be equal to $W_G(\mathbf{r})$. Instead, we define the quantity,
\begin{align}
    \label{eq:def0}
    \mu[W]=\underset{\overline{W}_G\color{black} \in \mathcal{G}}{\text{min}} D[W||\overline{W}_G\color{black}],
\end{align}
where the functional minimization with respect to $\overline{W}_G\color{black}$ is over the set $\mathcal{G}$ of all Gaussian Wigner functions. Using Eqs.~\eqref{eq:GaussianWigner} and \eqref{eq:GaussianWignerTilde}, Eq.~\eqref{eq:WREtilde} gives,
\begin{align}
    \nonumber D[W||\overline{W}_G\color{black}] = -h[W]+N \ln (2\pi)+\frac{1}{2} \ln \det \overline{\mathbf{V}} \color{black}\\
    \label{eq:WRE2} + \frac{1}{2}\text{Tr}(\mathbf{V}\overline{\mathbf{V}}^{-1} \color{black})+\frac{1}{2}(\mathbf{d}-\overline{\mathbf{d}} \color{black})^T \overline{\mathbf{V}}^{-1} \color{black} (\mathbf{d}-\overline{\mathbf{d}} \color{black}).
\end{align}
In Eq. \eqref{eq:WRE2}, the matrix $\overline{\mathbf{V}}^{-1} \color{black}$ is positive definite (as it is the inverse of a quantum state's covariance matrix). Therefore, we have the inequality,
\begin{align}
(\mathbf{d}-\overline{\mathbf{d}} \color{black})^T \overline{\mathbf{V}}^{-1} \color{black} (\mathbf{d}-\overline{\mathbf{d}} \color{black})\geq 0
\end{align}
with equality to zero if and only if $\mathbf{d}=\overline{\mathbf{d}} \color{black}$, which means that $ \overline{W}_G(\mathbf{r}) \color{black}$ and $W(\mathbf{r})$ must have identical displacement vectors. Therefore, from Eq. \eqref{eq:WRE2}, we now have to minimize the quantity,
\begin{align}
\nonumber D[W||\overline{W}_G\color{black}] =& -h[W]+N \ln (2\pi)+\frac{1}{2} \ln \det \overline{\mathbf{V}} \color{black}\\
\label{eq:WRE3} &+ \frac{1}{2}\text{Tr}(\mathbf{V}\overline{\mathbf{V}}^{-1} \color{black}).
\end{align}
We proceed to show that Eq.~\eqref{eq:WRE3} is minimized if and only if $\overline{\mathbf{V}} \color{black}=\mathbf{V}$. Using the right-hand side of Eq.~\eqref{eq:WRE3} it suffices to prove the following inequality,
\begin{align}
\nonumber &\ln \det \overline{\mathbf{V}} \color{black}+\text{Tr}(\mathbf{V}\overline{\mathbf{V}}^{-1} \color{black}) \geq \ln \det \mathbf{V}+\text{Tr}(\mathbf{VV}^{-1}) 
\\\Leftrightarrow\quad& 
\nonumber \text{Tr}(\mathbf{V}\overline{\mathbf{V}}^{-1} \color{black}) \geq \ln \det (\mathbf{V}\overline{\mathbf{V}}^{-1} \color{black})+2N 
\\
\nonumber
\Leftrightarrow\quad & \text{Tr}(\mathbf{V}^{\frac{1}{2}} \overline{\mathbf{V}}^{-1}\color{black} \mathbf{V}^{\frac{1}{2}}) \geq \ln \det (\mathbf{V}^{\frac{1}{2}} \overline{\mathbf{V}}^{-1} \color{black} \mathbf{V}^{\frac{1}{2}})+2N
\\
\Leftrightarrow\quad &\text{Tr}\mathbf{A} \geq \ln \det \mathbf{A}+2N.
 \label{eq:Ineq}
\end{align}
When going from the second to the third line of \eqref{eq:Ineq}, in addition to to the cyclic property of the trace and determinant, we use the fact that $\mathbf{V} = \mathbf{V}^{\frac{1}{2}} \mathbf{V}^{\frac{1}{2}}$ for a positive definite matrix $\mathbf{V}$, $\mathbf{V}^{\frac{1}{2}} $ its self being positive definite as well. The positive definite matrix $A = \mathbf{V}^{\frac{1}{2}} \overline{\mathbf{V}}^{-1}\color{black} \mathbf{V}^{\frac{1}{2}}$ has eigenvalues $\{\lambda_1,\ldots,\lambda_{2N}\}$ with $\lambda_i>0,\, \forall\, i=1,\ldots,2N$. For any positive $x$, it is always true that $x-1 \geq \ln x$. Therefore, for the eigenvalues of $A$, we have,
\begin{align}
\nonumber 
&\sum_{i=1}^{2N}\left(\lambda_i-1\right) \geq \sum_{i=1}^{2N}\ln \lambda_i 
\\\Leftrightarrow \nonumber\quad&\sum_{i=1}^{2N}\lambda_i - 2N \geq \ln\prod_{i=1}^{2N}\lambda_i 
\\\Leftrightarrow\quad&\text{Tr}\mathbf{A} \geq \ln \det \mathbf{A} + 2N,
\end{align}
which is the desired inequality \eqref{eq:Ineq}. Therefore, $D[W||\overline{W}_G\color{black}]$ is minimized if $ \overline{W}_G(\mathbf{r}) \color{black}$ is the Gaussian associate of $W(\mathbf{r})$, i.e., $ \overline{W}_G(\mathbf{r}) \color{black}=W_G(\mathbf{r})$.

Hence, with the proof above Eq.~\eqref{eq:def0} gives,
\begin{align}
\label{eq:def1} \mu[W]= D[W||W_G].
\end{align}
We note that there is only one argument in $\mu[.]$ as $W_G(\mathbf{r})$ has a functional dependence on $W(\mathbf{r})$ and is uniquely defined. Using Eq.~\eqref{eq:WREtilde}, we write,
\begin{subequations}
    \begin{align}
    \label{eq:muN}	\mu[W] &= -h[W]-\int d^{2N}r W(\mathbf{r}) \ln W_G(\mathbf{r}) \\
    \label{eq:muN2}&= -h[W]-\int d^{2N}r W_G(\mathbf{r}) \ln W_G(\mathbf{r})\\
    \label{eq:muN3}&= -h[W]+\ln \left((2\pi e)^N \sqrt{\det \mathbf{V}}\right),
    \end{align}
\end{subequations}

where we have used the fact that,
\begin{align}
\nonumber \ln W_G(\mathbf{r})&=-\ln \left((2\pi)^N \sqrt{\det \mathbf{V}}\right)\\
&-\frac{1}{2}(\mathbf{r}-\mathbf{d})^T \mathbf{V}^{-1}(\mathbf{r}-\mathbf{d}),
\end{align}
rendering the integrals in Eqs.~\eqref{eq:muN} and~\eqref{eq:muN2} equal, i.e., only the first and second moments will play role in said integrals. Equation~\eqref{eq:muN2} is equal to Eq. \eqref{eq:muN3} as straightforward evaluation gives,
\begin{align}
\nonumber \frac{1}{2} \int d^{2N}r W_G(\mathbf{r}) (\mathbf{r}-\mathbf{d})^T \mathbf{V}^{-1}(\mathbf{r}-\mathbf{d})=-N.
\end{align}
It is of central importance to discuss how to deal with Wigner functions that posses negative volume. We can write any Wigner function as,
\begin{align}
W(\mathbf{r})=|W(\mathbf{r})|e^{i (2k+1)\pi \phi(\mathbf{q},\mathbf{p})},
\end{align}
where $k\in \mathbb{Z}$ and,
\begin{align}
\label{eq:phi}	\phi(\mathbf{r}) = \left\{ \begin{array}{ll}
1,& W(\mathbf{r})<0 \\
0,& W(\mathbf{r}) \geq 0.
\end{array}
\right.
\end{align} is the marker function which distinguishes between the positive and negative functional values in the domain. Therefore, $\ln W(\mathbf{r})=\ln |W(\mathbf{r})|+i(2k+1)\pi \phi(\mathbf{r})$ where $k$ defines the branch of the complex logarithm. We choose to work with $k=-1$ and we rewrite Eq.~\eqref{eq:muN} as,
\begin{align}
\label{eq:muRI}	\mu[W]= \text{Re}\mu[W]+i \text{Im}\mu[W],
\end{align}
where,
\begin{align}
\label{eq:Remu}	\text{Re}\mu[W] &= -\text{Re}h[W]+h[W_G]\\
\nonumber \text{Im}\mu[W] &= -\pi \int d^{2N} r W(\mathbf{r}) \phi(\mathbf{r}) \\ 
\label{eq:Immu}  &=  \pi |\mathcal{V}_-|
\end{align}
and
\begin{align}
\label{eq:Remu3} \text{Re}h[W] = -\int d^{2N} r W(\mathbf{r}) \ln |W(\mathbf{r})|,
\end{align}
\begin{subequations}
  \begin{align}
h[W_G] &=- \int d^{2N} r W(\mathbf{r}) \ln W_G(\mathbf{r})\\
&=-\int d^{2N} r W_G(\mathbf{r}) \ln W_G(\mathbf{r})\\
&=\ln \left((2\pi e)^N \sqrt{\det \mathbf{V}}\right),
\end{align}  
\end{subequations}
where $|\mathcal{\mathcal{V}}_-|$ is the total negative volume of the Wigner function $W(\mathbf{r})$.

We note the following set of properties for the nGM $\mu[W]$,
\begin{theorem}
    $\mu[W]$ evaluates to zero if and only if $W$ is a Gaussian Wigner function.
\end{theorem}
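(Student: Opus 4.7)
The plan is to treat the two directions separately, and in the nontrivial direction to split the hypothesis $\mu[W]=0$ into its imaginary and real parts, which carry complementary information.

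For the easy direction, suppose that $W$ is a Gaussian Wigner function. Then $W$ has the form of Eq.~\eqref{eq:GaussianWigner} with some displacement vector and covariance matrix; by the very definition of the Gaussian associate, $W_G=W$. Substituting this into the expression \eqref{eq:muN} for $\mu[W]$, the integrand in the second term becomes $W \ln W$ and cancels against $-h[W]$, while the marker function $\phi$ in Eq.~\eqref{eq:phi} is identically zero because $W\geq 0$. Both $\mathrm{Re}\,\mu[W]$ and $\mathrm{Im}\,\mu[W]$ therefore vanish.

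For the nontrivial direction, assume $\mu[W]=0$, so that both $\mathrm{Re}\,\mu[W]=0$ and $\mathrm{Im}\,\mu[W]=0$. From Eq.~\eqref{eq:Immu} we have $\mathrm{Im}\,\mu[W]=\pi\,|\mathcal{V}_-|$, hence $|\mathcal{V}_-|=0$, i.e.\ the negative volume of $W$ is zero and $W(\mathbf{r})\geq 0$ almost everywhere. In particular, $W$ is a bona fide probability density on phase space (it is normalized because it is a Wigner function), and its complex logarithm reduces to the ordinary real logarithm, so that $\mathrm{Re}\,h[W]=h[W]$ in the classical differential-entropy sense. Using the construction of $W_G$ and the observation, already exploited between Eqs.~\eqref{eq:muN} and \eqref{eq:muN2}, that $\int W \ln W_G = \int W_G \ln W_G$ (only the first and second moments enter), we can rewrite the real part as the classical Kullback--Leibler divergence
\begin{align*}
\mathrm{Re}\,\mu[W] = -h[W]+h[W_G] = \int d^{2N}r\, W(\mathbf{r}) \ln\!\frac{W(\mathbf{r})}{W_G(\mathbf{r})}.
\end{align*}

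The final step invokes Gibbs' inequality for probability densities: this KL divergence is nonnegative and vanishes if and only if $W=W_G$ almost everywhere. Hence $\mathrm{Re}\,\mu[W]=0$ forces $W=W_G$, and since $W_G$ is Gaussian by construction, $W$ is Gaussian, as required. The main obstacle I anticipate is the proper justification of the second bullet: one must explain why the complex logarithm collapses to the real one once $|\mathcal{V}_-|=0$ (i.e.\ zeros of $W$ on a measure-zero set can be ignored because they contribute $0 \cdot \ln 0 = 0$ to the entropy integrals), and why the classical Gibbs inequality is applicable in the multi-mode, non-compactly-supported setting. Both follow from standard arguments for smooth, normalized, non-negative densities with finite second moments, but they deserve an explicit remark since the starting object $W$ is a priori only a quasi-probability distribution.
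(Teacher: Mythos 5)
Your proposal is correct and follows essentially the same route as the paper's own proof: the forward direction uses $W_G=W$ together with the vanishing of the marker function, and the converse uses $\mathrm{Im}\,\mu[W]=0$ to force pointwise non-negativity and then identifies $\mathrm{Re}\,\mu[W]$ with a classical Kullback--Leibler divergence, concluding via Gibbs' inequality. Your closing remark on the measure-zero set where $W$ vanishes and on the applicability of Gibbs' inequality is a welcome precision that the paper leaves implicit.
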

\begin{proof}
    See Section~\ref{sec:faithfulness} for proof.
\end{proof}

\begin{theorem}
    $\mu[W]$ is invariant under Gaussian unitary operations on the Wigner function $W$ and its underlying state.
\end{theorem}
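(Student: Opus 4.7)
The plan is to work directly from the compact expression $\mu[W] = -h[W] + \ln\bigl((2\pi e)^N \sqrt{\det\mathbf{V}}\bigr)$ of Eq.~\eqref{eq:muN3} and show that each of the two terms on the right-hand side is separately invariant under a Gaussian unitary. To this end, I would first invoke the standard fact (see \cite{Ferraro2005,Weedbrook2012}) that any Gaussian unitary $U_{S,\mathbf{a}}$ acts at the level of the Wigner function by an affine symplectic transformation of phase space, namely $W(\mathbf{r}) \mapsto W'(\mathbf{r}) = W\bigl(S^{-1}(\mathbf{r}-\mathbf{a})\bigr)$, where $S$ is a $2N\times 2N$ symplectic matrix (in particular $\det S = +1$) and $\mathbf{a}\in\mathbb{R}^{2N}$.

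The invariance of $h[W]$ then follows by the change of variables $\mathbf{r}' = S^{-1}(\mathbf{r}-\mathbf{a})$, whose Jacobian is $|\det S^{-1}| = 1$, so that both the integration measure and the integrand transport trivially and $h[W'] = h[W]$. The same substitution, applied to the complex-valued pieces of Eqs.~\eqref{eq:Remu3} and~\eqref{eq:Immu}, shows that $\text{Re}\,h[W]$ and the negative-volume integral $\pi|\mathcal{V}_-|$ are individually preserved: the map is volume-preserving and pointwise sign-preserving, so the negative region of $W'$ is the image under $\mathbf{r}\mapsto S\mathbf{r}+\mathbf{a}$ of the negative region of $W$, which has the same total Lebesgue measure. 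Next, I would use the Heisenberg action $\hat{r}_i\mapsto\sum_j S_{ij}\hat{r}_j+a_i$ on the quadratures to obtain $\mathbf{d}\mapsto S\mathbf{d}+\mathbf{a}$ and $\mathbf{V}\mapsto S\mathbf{V}S^T$, which gives $\det\mathbf{V}' = (\det S)^2\det\mathbf{V} = \det\mathbf{V}$ and hence invariance of $h[W_G] = \ln\bigl((2\pi e)^N\sqrt{\det\mathbf{V}}\bigr)$. Assembling these pieces in the decomposition \eqref{eq:muRI} yields $\mu[W'] = \mu[W]$.

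I do not anticipate a genuine obstacle; the work is conceptual rather than computational, and relies on two properties of symplectic matrices, namely $\det S = 1$ (which makes the $2N$-dimensional entropy integral invariant) and $\det(S\mathbf{V}S^T) = \det\mathbf{V}$ (which makes $h[W_G]$ invariant). The only point that deserves a brief remark is that the argument goes through unchanged in the complex-valued formulation of Eq.~\eqref{eq:muRI}, because the branch choice in $\ln W(\mathbf{r}) = \ln|W(\mathbf{r})| + i(2k+1)\pi\phi(\mathbf{r})$ is determined solely by the sign of $W(\mathbf{r})$, which the affine symplectic substitution on the argument of $W$ manifestly preserves.
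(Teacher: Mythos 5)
Your proposal is correct and follows essentially the same route as the paper: decompose $\mu[W]$ into the entropic pieces and the negative volume, and observe that the affine symplectic action of a Gaussian unitary on phase space (volume-preserving, sign-preserving, with $\det(S\mathbf{V}S^T)=\det\mathbf{V}$) leaves each piece unchanged. The only difference is that you carry out the change-of-variables computation explicitly, whereas the paper cites Ref.~\cite{hertzthesis} for the invariance of $\mathrm{Re}\,h[W]$ and states the conservation of negative volume without calculation.
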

\begin{proof}
    See Section~\ref{sec:gauss_invariance} for proof and  associated discussion.
\end{proof}

\begin{theorem}
    If Eq. \eqref{eq:condition_gaussian_channel} is satisfied, $\mathrm{Re}\mu[W]$ decreases under the action of a Gaussian channel on the Wigner function $W$ and its underlying state.
\end{theorem}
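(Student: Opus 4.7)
The plan is to track how the two constituents of $\mathrm{Re}\mu[W]=-\mathrm{Re}\,h[W]+h[W_G]$ in Eq.~\eqref{eq:Remu} transform under a Gaussian channel $\Phi$, and to package their combined change into the inequality encoded by the hypothesis \eqref{eq:condition_gaussian_channel}. At the Wigner-function level, a Gaussian channel amounts to a linear symplectic/squeezing map $\mathbf{r}\mapsto X\mathbf{r}$ (possibly with a displacement) followed by a convolution with a zero-mean Gaussian kernel $G_Y$ of covariance $Y$; equivalently, $\mathbf{V}\mapsto \mathbf{V}'=X\mathbf{V}X^{T}+Y$ and $\mathbf{d}\mapsto X\mathbf{d}+\mathbf{d}_0$. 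The key observation is that the Gaussian associate of the output state is simply the Gaussian with covariance $\mathbf{V}'$ and displacement $X\mathbf{d}+\mathbf{d}_0$, so that $h[W'_G]$ can be read off directly from Eq.~\eqref{eq:muN3}.

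First I would factorize $\Phi$ into the two pieces above and dispatch the linear step by a change of variables, which yields $\mathrm{Re}\,h[W\circ X^{-1}]=\mathrm{Re}\,h[W]+\ln|\!\det X|$ and $h[W_G\circ X^{-1}]=h[W_G]+\ln|\!\det X|$; these contributions cancel in $\mathrm{Re}\mu$, recovering (and extending beyond the unitary case of Theorem~2) invariance under the linear part of any Gaussian channel. Displacements are absorbed by the same argument since $h[W]$ and $\det\mathbf{V}$ are translation invariant. This reduces the problem to the purely noisy step in which $\mathbf{V}\mapsto \mathbf{V}+Y$ and $W\mapsto W\ast G_Y$.

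Next I would compute the change of each term under the convolution step. The Gaussian-associate entropy grows by the explicit amount
\begin{equation}
    h[W_G\ast G_Y]-h[W_G]=\tfrac{1}{2}\ln\det\!\bigl((\mathbf{V}+Y)\mathbf{V}^{-1}\bigr),
\end{equation}
and the corresponding change in $\mathrm{Re}\,h[W]$ follows from substituting $W\ast G_Y$ into Eq.~\eqref{eq:Remu3}. Rearranging $\mathrm{Re}\mu[W\ast G_Y]\le\mathrm{Re}\mu[W]$ then produces exactly a comparison between the growth of $\mathrm{Re}\,h$ and the growth of $h[W_G]$, which is the content that Eq.~\eqref{eq:condition_gaussian_channel} is meant to assert. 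Feeding that hypothesis into the rearranged inequality closes the argument.

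The main obstacle is the handling of the signed quantity $\mathrm{Re}\,h[W]=-\int d^{2N}r\,W(\mathbf{r})\ln|W(\mathbf{r})|$: because $W$ can be negative, $\mathrm{Re}\,h$ is not a bona fide differential entropy, so the usual tools (entropy power inequality, data processing, de Bruijn's identity) do not apply to $W\ast G_Y$ out of the box. Obtaining a useful lower bound on $\mathrm{Re}\,h[W\ast G_Y]-\mathrm{Re}\,h[W]$ for arbitrary (possibly Wigner-negative) $W$ is precisely the delicate step that the paper sidesteps by promoting it to a sufficient condition, Eq.~\eqref{eq:condition_gaussian_channel}, and then supporting it numerically rather than proving it in full generality.
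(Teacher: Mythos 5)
Your opening moves match the paper's: Appendix~\ref{apd:evolution_measure_gaussian_channels} likewise factors the channel into a linear rescaling plus a Gaussian convolution, shows the rescaling contributes $\ln\vert\det\mathbf{S}\vert$ to both entropies and hence cancels in $\mathrm{Re}\mu$, and reduces everything to the convolution step. But from that point on your proposal has a genuine gap: you never connect the hypothesis \eqref{eq:condition_gaussian_channel}, which is a statement about the Fisher information matrix $\mathbf{J}$ of $W$ (namely $\mathbf{V}^{-1}\leq\mathbf{J}$), to the finite inequality $\mathrm{Re}\,h[W\ast G_Y]-\mathrm{Re}\,h[W]\geq\tfrac{1}{2}\ln\det\bigl(\mathbf{I}+Y\mathbf{V}^{-1}\bigr)$ that your rearrangement produces. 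Your claim that this rearranged inequality ``is the content that Eq.~\eqref{eq:condition_gaussian_channel} is meant to assert'' is not correct as stated: the condition is local (a matrix inequality between $\mathbf{V}^{-1}$ and $\mathbf{J}$), not a global comparison of entropy increments, and $\mathbf{J}$ never appears anywhere in your argument. Two ingredients are needed to bridge that gap, and both are supplied in the paper. First, the Gaussian kernel is decomposed into a semigroup of $1/\varepsilon$ convolutions with narrow Gaussians of covariance $\varepsilon\mathbf{G}$, so that the finite monotonicity statement follows from the sign of the derivative $\frac{\mathrm{d}}{\mathrm{d}\varepsilon}\mathrm{Re}\mu\bigl[W\ast\mathcal{L}_{\sqrt{\varepsilon}}[G]\bigr]\big\vert_{\varepsilon=0}$. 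Second, that derivative is evaluated via a multivariate De Bruijn identity, giving $\frac{1}{2}\Tr\bigl[\mathbf{G}(\mathbf{V}^{-1}-\mathbf{J})\bigr]$, whence the condition $\mathbf{V}^{-1}\leq\mathbf{J}$ suffices for negativity for every admissible $\mathbf{G}$.

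This also means your closing characterization of the paper is backwards. You say De Bruijn's identity ``does not apply out of the box'' to signed $W$ and that the paper ``sidesteps'' the issue by retreating to a numerically supported sufficient condition. In fact, the central technical contribution of Appendix~\ref{apd:evolution_measure_gaussian_channels} is precisely an extension of De Bruijn's identity to Wigner functions with negative regions, obtained by running the heat-equation/integration-by-parts derivation on $\mathrm{Re}\,h[W]=-\int W\ln\vert W\vert$ with a suitably generalized Fisher information matrix; the identity survives because $\frac{\partial}{\partial r_i}\ln\vert W\vert=\frac{1}{W}\frac{\partial W}{\partial r_i}$ holds wherever $W\neq 0$. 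What the paper cannot do for Wigner-negative states is the \emph{final} step --- invoking the Cram\'er--Rao bound $\mathbf{V}\geq\mathbf{J}^{-1}$, which is only guaranteed for genuine probability distributions --- and that is the step promoted to the hypothesis of the theorem. To complete your proof you would need to either reproduce the semigroup-plus-De-Bruijn machinery or find an independent route from $\mathbf{V}^{-1}\leq\mathbf{J}$ to a lower bound on the entropy gain under a finite Gaussian convolution; as written, the hypothesis is never actually used.
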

\begin{proof}
    See Section~\ref{sec:gauss_channels} for proof and  associated discussion.
\end{proof}

\begin{theorem}
    The Gaussian associate of any Wigner function is the only physical critical point and it corresponds to a minimum of $\mathrm{Re}\mu[W]$.
\end{theorem}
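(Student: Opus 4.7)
The plan is to treat $\mathrm{Re}\, D[W\|\tilde{W}_G]$ as a real-valued function on the parameter space of the Gaussian argument $\tilde{W}_G$, coordinatized by its displacement vector $\tilde{\mathbf{d}}\in\mathbb{R}^{2N}$ and covariance matrix $\tilde{\mathbf{V}}$, the latter ranging over the \emph{physical} set of symmetric matrices that satisfy the Heisenberg uncertainty relation (and are therefore strictly positive definite). Since $-\mathrm{Re}\,h[W]$ does not depend on $\tilde{W}_G$, the critical points of $\mathrm{Re}\mu[W]$ over this parameter space coincide with those of
\[
F(\tilde{\mathbf{d}},\tilde{\mathbf{V}}) \equiv \tfrac{1}{2}\ln\det\tilde{\mathbf{V}} + \tfrac{1}{2}\Tr(\mathbf{V}\tilde{\mathbf{V}}^{-1}) + \tfrac{1}{2}(\mathbf{d}-\tilde{\mathbf{d}})^T\tilde{\mathbf{V}}^{-1}(\mathbf{d}-\tilde{\mathbf{d}}),
\]
read off from Eq.~\eqref{eq:WRE2}.

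First I would impose $\nabla_{\tilde{\mathbf{d}}} F = 0$, which yields $\tilde{\mathbf{V}}^{-1}(\tilde{\mathbf{d}}-\mathbf{d}) = 0$; positive-definiteness of $\tilde{\mathbf{V}}^{-1}$ forces $\tilde{\mathbf{d}} = \mathbf{d}$. Next, using the standard matrix-calculus identities $\partial_{\tilde{\mathbf{V}}}\ln\det\tilde{\mathbf{V}} = \tilde{\mathbf{V}}^{-1}$ and $\partial_{\tilde{\mathbf{V}}}\Tr(\mathbf{V}\tilde{\mathbf{V}}^{-1}) = -\tilde{\mathbf{V}}^{-1}\mathbf{V}\tilde{\mathbf{V}}^{-1}$ (taking symmetry of $\tilde{\mathbf{V}}$ into account), the condition $\nabla_{\tilde{\mathbf{V}}} F = 0$ evaluated at $\tilde{\mathbf{d}}=\mathbf{d}$ collapses to $\tilde{\mathbf{V}}^{-1}(\tilde{\mathbf{V}}-\mathbf{V})\tilde{\mathbf{V}}^{-1}=0$, and positive-definiteness again gives the unique solution $\tilde{\mathbf{V}}=\mathbf{V}$. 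This establishes that the Gaussian associate $W_G$ is the only stationary point in the physical parameter domain, and it is automatically physical because $\mathbf{V}$ is the covariance of the underlying state $\rho$.

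To upgrade this critical point to a minimum, I would reuse the inequality $\Tr\mathbf{A}\geq \ln\det\mathbf{A}+2N$ proved in Eq.~\eqref{eq:Ineq}, applied with $\mathbf{A}=\mathbf{V}\tilde{\mathbf{V}}^{-1}$: this directly bounds $F(\mathbf{d},\tilde{\mathbf{V}})\geq F(\mathbf{d},\mathbf{V})$ for every admissible $\tilde{\mathbf{V}}$ with equality iff $\tilde{\mathbf{V}}=\mathbf{V}$; combined with the non-negativity of the displacement quadratic form already noted below Eq.~\eqref{eq:WRE2}, this promotes the Gaussian associate to the strict global minimum of $\mathrm{Re}\mu[W]$. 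As a redundant consistency check, one could compute the Hessian of $F$ at $(\mathbf{d},\mathbf{V})$: the mixed $(\delta\tilde{\mathbf{d}},\delta\tilde{\mathbf{V}})$ block vanishes there, the $\delta\tilde{\mathbf{d}}$ block equals $\mathbf{V}^{-1}$, and the $\delta\tilde{\mathbf{V}}$ block is a quadratic form in $\mathbf{V}^{-1}(\delta\tilde{\mathbf{V}})\mathbf{V}^{-1}$ that is manifestly positive semidefinite.

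The main obstacle I anticipate is the matrix-calculus bookkeeping for the covariance variation, specifically enforcing symmetry of the perturbation when differentiating $\Tr(\mathbf{V}\tilde{\mathbf{V}}^{-1})$ and $\ln\det\tilde{\mathbf{V}}$; however, because $\mathbf{V}$ and $\tilde{\mathbf{V}}$ are both symmetric, the symmetrization projectors act trivially and the stationarity equation reduces cleanly to $\tilde{\mathbf{V}}=\mathbf{V}$. A secondary subtlety is that one should a priori allow $\tilde{\mathbf{V}}$ to range over the full cone of positive-definite matrices rather than only physical quantum covariances, but since the latter is a subset of the former, uniqueness of the critical point on the larger cone implies uniqueness on the physical domain as well.
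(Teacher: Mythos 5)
Your proposal optimizes over the wrong variable, so it proves a different (and already-established) statement rather than this theorem. You treat $\mathrm{Re}\,D[W\|\tilde W_G]$ as a function of the \emph{Gaussian argument's} parameters $(\tilde{\mathbf{d}},\tilde{\mathbf{V}})$ and show the unique stationary point is $\tilde{\mathbf{d}}=\mathbf{d}$, $\tilde{\mathbf{V}}=\mathbf{V}$. That is precisely the minimization the paper already carries out in Sec.~\ref{sec:definition} to justify the definition $\mu[W]=D[W\|W_G]$ (Eqs.~\eqref{eq:WRE2}--\eqref{eq:def1}), using the same inequality $\Tr\mathbf{A}\geq\ln\det\mathbf{A}+2N$ you invoke. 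The theorem at hand is about something else: once $\mu[W]=D[W\|W_G]$ is fixed, $\mathrm{Re}\,\mu[\cdot]$ is a functional of the Wigner function $W$ itself, and the claim is that among all $W$ satisfying the normalization and moment constraints \eqref{eq:ReCon1}--\eqref{eq:ReCon3}, the only \emph{physical} critical point of this functional is $W_G$, and it is a minimum. The paper proves this by a constrained functional derivative with respect to $W$ (Appendix~\ref{app:minReal}): the stationarity condition $1+\lambda_1+\ln|W(\mathbf{r})|+\sum_i\delta_i r_i+\sum_{ij}\Lambda_{ij}r_ir_j=0$ determines only $|W|$, so the critical points comprise $W_G$ \emph{and} a family of ``partially flipped'' Gaussians with $|W_{fG}|=W_G$; the word ``physical'' in the theorem refers to discarding the latter via the continuity requirement \eqref{eq:ReCon4}, and the minimum property follows from the second functional derivative $\delta^2F/\delta W^2=1/W_G>0$. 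None of this structure can appear in your finite-dimensional parametrization of $\tilde W_G$.

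A secondary issue: even within your reading, your claim of a ``strict global minimum of $\mathrm{Re}\mu[W]$'' overreaches relative to what the paper can assert. Because $\varphi(x)=-x\ln|x|$ is neither concave nor convex over $\mathbb{R}$, the paper explicitly declines to conclude global minimality of $W_G$ over Wigner-negative states and only establishes it as the unique physical local minimum. Your global statement is true only for the auxiliary optimization over $(\tilde{\mathbf{d}},\tilde{\mathbf{V}})$, not for the functional of $W$ that the theorem concerns.
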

\begin{proof}
    See Section~\ref{sec:minimum_value} for proof and  associated discussion.
\end{proof}

\section{Faithfulness}
\label{sec:faithfulness}
By \emph{faithfulness} we mean that $W(\mathbf{r})$ is Gaussian, if and only if $\mu[W]=0$.
We clarify that the meaning of $\mu[W]=0$ is that both its real and imaginary parts are simultaneously equal to zero.

\emph{Proof of direct statement:} Let $W(\mathbf{r})= \overline{W}_G(\mathbf{r}) \color{black}$, i.e., the Wigner function under consideration is Gaussian, but not necessarily equal to its Gaussian associate $W_G(\mathbf{r})$. The Gaussian associate of $W_G(\mathbf{r})$ is equal to itself. Since a Gaussian Wigner function is positive over its entire domain, i.e., $|W_G(\mathbf{r})|=W_G(\mathbf{r})$, we get,
\begin{eqnarray}
\text{Im}\mu[ \overline{W}_G\color{black}]=\pi |\mathcal{V}_{-}|=0.
\end{eqnarray}
From said fact and Eqs.~\eqref{eq:WE} and~\eqref{eq:Remu3}, we get $\text{Re}h[ \overline{W}_G \color{black}]=h[ \overline{W}_G \color{black}]$. Hence using Eq.~\eqref{eq:Remu} we have,
\begin{align}
\label{eq:hG} 
\text{Re}\mu[ \overline{W}_G \color{black}] &= -\text{Re}h[ \overline{W}_G \color{black}]+h[ \overline{W}_G \color{black}]=0.
\end{align}

\emph{Proof of converse statement:} For some Wigner function $W(\mathbf{r})$, let $\mu[W]=0$ , i.e.,
\begin{subequations}
    \begin{align}
    \label{eq:remu=0} \text{Re}\mu[W]&=0\\
    \label{eq:immu=0} \text{Im}\mu[W]&=0.
    \end{align}
\end{subequations}
Equation~\eqref{eq:immu=0} imposes that $W(\mathbf{r})$ is positive over its entire domain, 
which implies that $W(\mathbf{r})$ is well-defined probability distribution function i.e.\ $W(\mathbf{r}) \equiv W_{>0}(\mathbf{r})$ . Consequently, $W(\mathbf{r})$ can be analyzed using classical probability theory, even if the state corresponding to $W(\mathbf{r})$, is a quantum state with no classical description. Therefore, $\text{Re}\,\mu[W_{>0}]$ represents a classical relative entropy quantity being equal to zero, 
\begin{align}
D[W_{>0}|| \overline{W}_G \color{black}]=0
\end{align}
which necessarily implies that $W_{>0}(\mathbf{r})= \overline{W}_G(\mathbf{r}) \color{black}$. It is worthwhile to note that given the natural emergence of $\mathrm{Im}\,\mu[W]$ as the total negative volume of the $W(\mathbf{r})$ ensures the faithfulness property of the complex-valued $\mu[W]$.

\section{Invariance under Gaussian unitary operators}
\label{sec:gauss_invariance}
 $\text{Re}h[W]$ (and consequently $h[W_G]$) are proven to be invariant under symplectic transformations in Ref.~\cite{hertzthesis}. Equivalently, Gaussian unitary operations acting on the state or to the corresponding Wigner function $W(\mathbf{r})$ leave  $\text{Re}h[W]$ unchanged.  Symplectic transformations also conserve the negative volume of a Wigner function, i.e., $\text{Im}\mu[W]$ remains invariant when the corresponding state undergoes a Gaussian unitary transformation. Therefore $\mu[W]$ is invariant under Gaussian unitary operations.

 \section{Behavior under Gaussian channels}
\label{sec:gauss_channels}
\subsection{Sufficient Condition}
 \begin{figure}[!]
    \centering
    \includegraphics[width=\columnwidth]{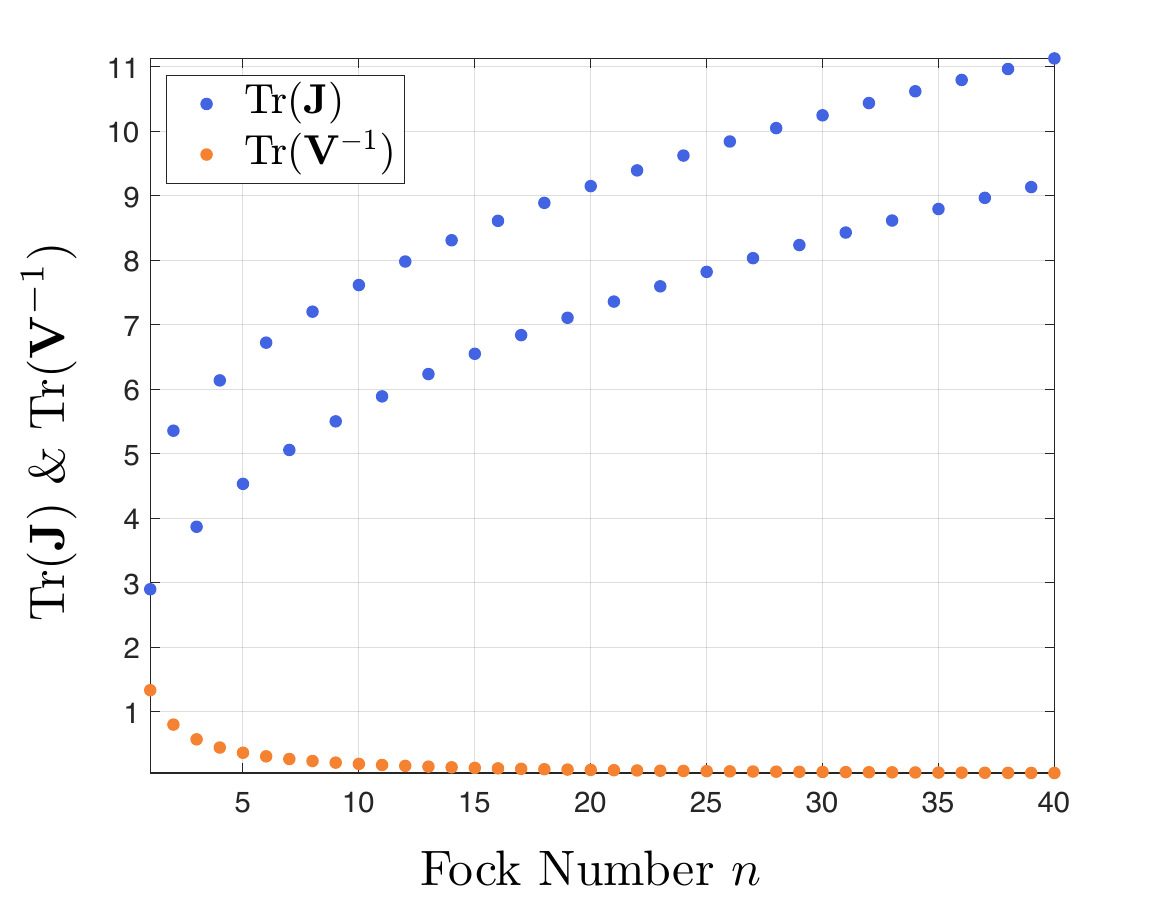}
    \caption{Plot of the trace of the Fisher information matrix $\text{Tr}(\mathbf{J})$ (blue) and $\mathrm{Tr} (\mathbf{V}^{-1})$ (orange) for the Fock states from $n=0$ up to $n=40$.}
    \label{fig:fisher_info_fock_states}
\end{figure}
\label{sec:GaussianChannelAnalytics}

 Intuitively, we expect the value of a non-Gaussianity measure to decrease under the action of a Gaussian channel. Namely, we expect such channels to yield a \emph{more Gaussian} output compared to its input, since it has been mixed with a Gaussian environment through some Gaussian unitary.
In our concern, we study a complex-valued non-Gaussianity measure, and should then check separately the behaviour of the real and imaginary parts.
Fortunately for us, the imaginary part of $\mu[W]$ is proportional to the negative volume of $W$, which has previously been studied and shown to be non-increasing under the action of a Gaussian channel \cite{Albarelli2018,Zhuang2018}.
This section will thus be devoted to the evolution of $\mathrm{Re}\mu[W]$ when $W$ evolves in a Gaussian channel.

 In Appendix \ref{apd:evolution_measure_gaussian_channels}, we present an equivalent condition for the real part of the measure to decrease under any Gaussian channel.
The condition reads as follows,
 \begin{align}
    \mathbf{V}\geq \mathbf{J}^{-1} \Leftrightarrow \mathbf{V}^{-1}\leq \mathbf{J}
    \label{eq:condition_gaussian_channel}
\end{align}
 where $\mathbf{V}$ is the covariance matrix of $W(\mathbf{r})$ and $\mathbf{J}$ is the Fisher information matrix corresponding to $W(\mathbf{r})$. 
  The Fisher information matrix, typically defined with respect to a parameter of interest in metrology, in this case is with respect to the phase space variables $\mathbf{r}$ (See Eq. \eqref{eq:FIwrtW}). \color{black} 
The equivalent condition from which Eq. \eqref{eq:condition_gaussian_channel} stems from is (see Appendix \ref{apd:evolution_measure_gaussian_channels}),
\begin{eqnarray}
\label{eq:SuffAndNec}\text{Tr}[\mathbf{U}(\mathbf{V}^{-1}-\mathbf{J})] \leq 0,
\end{eqnarray}
for all quantum covariance matrices $\mathbf{U}$. For single-mode channels, if $\mathbf{U}\propto\mathbf{I}$ (with a positive constant), the channel can be a Gaussian phase-invariant channel, i.e., the pure loss channel or the quantum-limited amplifier. For said case, Eq. \eqref{eq:SuffAndNec} becomes,
\begin{eqnarray}
    \text{Tr}(\mathbf{V}^{-1}) \leq \text{Tr}(\mathbf{J}),
\end{eqnarray}
which is what we plot in Fig. \ref{fig:fisher_info_fock_states}, for the special case of Fock states $|n\rangle$, known to possess negative regions in their Wigner representations for $n\geq 1$. The approach to compute $\mathbf{J}$ for Fock states is given in Appendix \ref{app:NumericalMethods} and is based on computing the principal value of the involved integrals when is necessary.

Interestingly, Eq. \eqref{eq:condition_gaussian_channel} corresponds precisely to the Cram\'er-Rao bound \cite{Cramer1999-bs, Rao1992-jo}, and is thus always satisfied for non-negative Wigner functions, as they correspond to genuine probability distributions. For Wigner functions that take negative values, we were not able conclude that Eq.~\eqref{eq:condition_gaussian_channel} always holds, nor we have a reason to believe that it does for all states. Nevertheless, our numerical study is partially presented in Fig.~\ref{fig:fisher_info_fock_states} illustrating that Eq.~\eqref{eq:SuffAndNec} is satisfied for Fock states under the action of certain Gaussian channels, i.e., when $\mathbf{U}\propto \mathbf{I}$. Furthermore, in Section~\ref{sec:GaussianChannelNumerics} we provide more numerical evidence on the decreasing behavior of $\mathrm{Re}\mu[W]$ under Gaussian channels.

\subsection{Numerical Evidence}
\label{sec:GaussianChannelNumerics}

In this section we show that $\mu[W]$, for a set of random $d$-dimensional states, is non-increasing under a thermal loss channel. The action of a thermal loss channel is to combine the input state, $\rho_0$, and a thermal state with mean photon number $\overline{n}_B$, represented by $\rho_{\text{th}}(\overline{n}_B)$, on a beam-splitter of transmissivity $\tau$. Tracing out the environmental mode leaves us with the channel output, $\rho_f$. Letting $\overline{n}_B$ go to zero reduces the thermal loss channel to the pure loss channel and letting $\tau$ go to $1$ corresponds to applying identity to $\rho_0$ . We choose to work with the Kraus operators representation instead of checking the vallidity of Eq. \eqref{eq:condition_gaussian_channel}. The thermal loss channel may be decomposed into a pure loss channel of transmissivity $\eta = \tau/G$ and quantum-limited amplifier with gain $G = 1 + (1-\tau)\overline{n}_B$ \cite{ivan2011operator}. The thermal loss channel output is then given by  $\rho_f = \sum_{k,l = 0}^{\infty}\hat{B}_k \hat{A}_l \rho_0 \hat{A}_l^{\dagger}\hat{B}_k^{\dagger}$ where the Kraus operators of the pure loss channel, $\hat{A}_l$, and quantum-limited amplifier, $\hat{B}_k$, are defined as 
\begin{subequations}
\begin{align}
    \hat{A}_l &= \sqrt{\frac{(1-\eta)^l}{l!}} \eta^{\hat{n}/2} \hat{a}^{l}\\
    \hat{B}_k &= \sqrt{\frac{1}{k!}\frac{1}{G}\Big(\frac{G-1}{G}\Big)^k} (\hat{a}^{\dagger})^{k}G^{-\hat{n}/2}.
\end{align}
\label{eq:KrausOperators}
\end{subequations}

We prepare $100$ random $d$-dimensional states  for $d \in \{2, 3, 4\}$, i.e. $300$ states in total, by generating random diagonal states in generic basis $\{\ket{\overline{0}}\!\!\bra{\overline{0}}, \ket{\overline{1}}\!\!\bra{\overline{1}}, \ldots, \ket{\overline{d-1}}\!\!\bra{\overline{d-1}}\}$ and then applying a Haar-random unitary to them \cite{zyczkowski1994random, zyczkowski1998volume}. By setting the basis states to different Fock states we extend the total number of unique states tested to $900$. In Fig.~\ref{WREQuditsThermal}, we plot $\mu[W]$ for a subset of the $900$ random states transmitted through a thermal loss channel of $\overline{n}_B = 0.001$, as a function of channel transmissivity $\tau$.
\begin{figure*}[!ht]
    \centering
    \includegraphics[width=\textwidth]{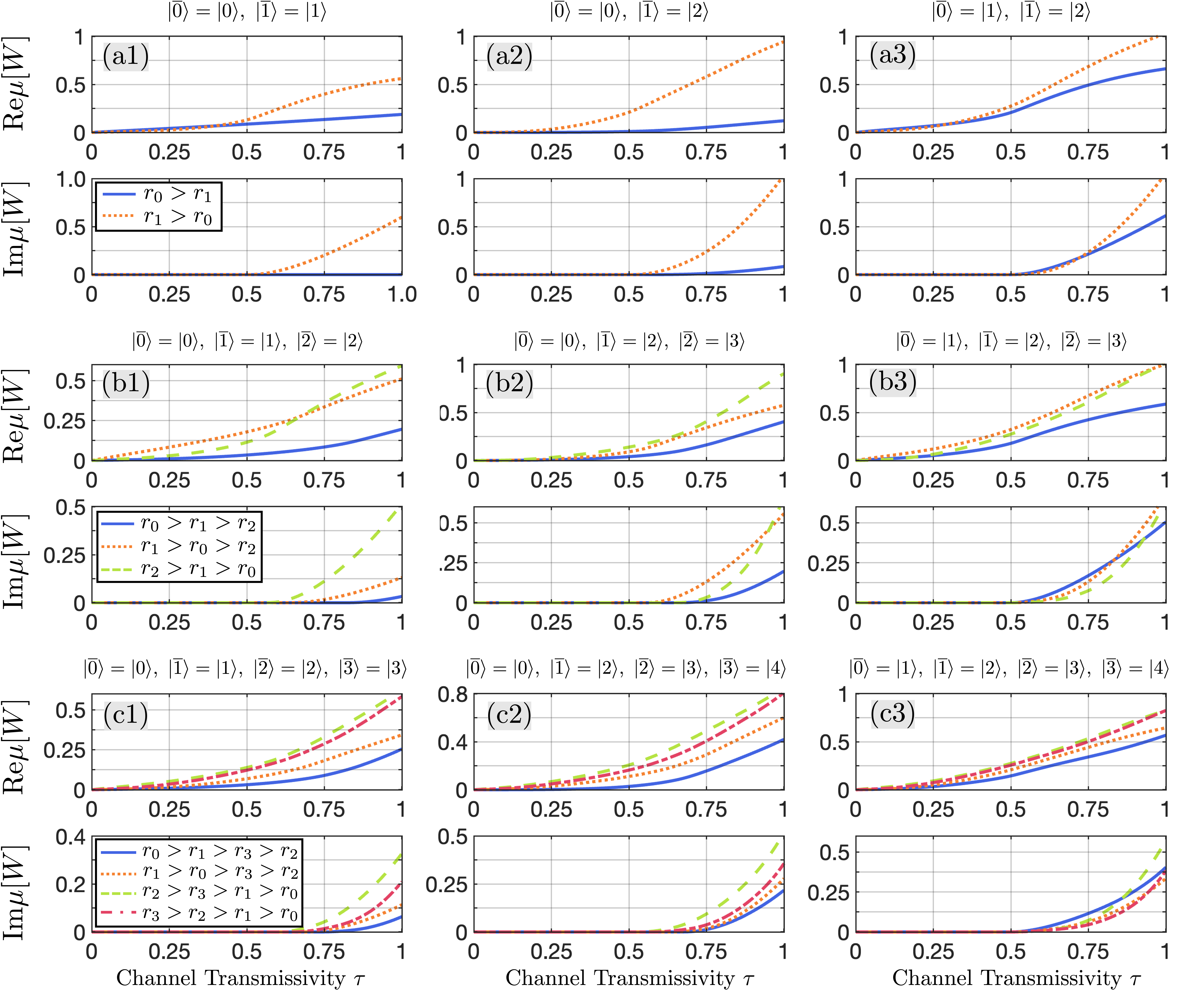}
    \caption{Non-Gaussianity measure evaluated for a subset of the $900$ total Haar-random $d$-dimensional states (a) $d= 2$, (b) $d=3$, and (c) $d = 4$  after traversing a thermal loss channel with temperature $\overline{n}_B = 0.001$ and transmissivity $\tau$.  The Haar random states are created by applying a a Haar-random unitary to an initial state $\hat{\rho} = r_0\ket{\overline{0}}\!\!\bra{\overline{0}} + r_1 \ket{\overline{1}}\!\!\bra{\overline{1}} + \ldots + r_{d-1}\ket{\overline{d-1}}\!\!\bra{\overline{d-1}}$. The coefficients $\{r_0, r_1, \ldots , r_{d-1}\}$ satisfy the inequality relations show in the insets. When evaluating the non-Gaussianity measure, the logical basis states are taken to be the Fock states (a1,b1,c1) $\{\ket{0}, \ldots, \ket{d-1}\}$, (a2,b2,c2) $\{\ket{0}, \ket{2}, \ldots, \ket{d}\}$, and (a3,b3,c3) $\{\ket{1}, \ldots, \ket{d}\}$.}
    \label{WREQuditsThermal}
\end{figure*}

For the states shown in Fig.~\ref{WREQuditsThermal} we see the action of the thermal loss channel only serves to reduce both $\text{Re}\mu[W]$ and $\text{Im}\mu[W]$ of the input states. This trend holds true for all $900$ Haar-random states supporting that our measure is non-increasing under the thermal loss channel, for this set of states.  

As further numerical evidence that $\mathrm{Re}\mu[W]$ is non-increasing under thermal loss, we study the evolution of Schr\"{o}dinger cat and GKP states (see Sec.~\ref{sec:apps}) under the thermal loss channel. For these more complicated non-Gaussian states we utilize phase space methods to simulate their evolution under the thermal loss channel. Specifically, if we define the action of re-scaling operator $\mathcal{L}_{s}$ on a Wigner function as 
\begin{align}
\label{eq:ScalingOperator}
    \mathcal{L}_{s}[W](\mathbf{r}) = \frac{1}{s^{2}}W\Big(\frac{\mathbf{r}}{s}\Big),
\end{align}
then the Wigner function of any state which is subjected to a thermal loss channel of transmissivity $\tau$ and mean photon number $\overline{n}_B$ can be expressed as the convolution of the re-scaled original state Wigner function with the re-scaled Wigner function of a thermal state. That is,
\begin{align}
\label{Eq.PhaseSpaceThermalLoss}
    W_{\text{out}}(\mathbf{r}) = \mathcal{L}_{\sqrt{\tau}}[W_{\text{in}}](\mathbf{r}) \ast \mathcal{L}_{\sqrt{1-\tau}}[W_{\overline{n}_B}](\mathbf{r}).
\end{align}

In Fig.~\ref{WRECatandGKPThermal} we plot $\mu[W_{\text{out}}]$ as a function of $\tau$ for input odd (A1) and even (A2) cat states of varying coherent amplitude $\alpha \in \mathbb{R}$ and input logical zero (B1) and one (B2) GKP states of varying squeezing $\Delta$.
\begin{figure*}[!]
    \centering
    \includegraphics[width=\textwidth]{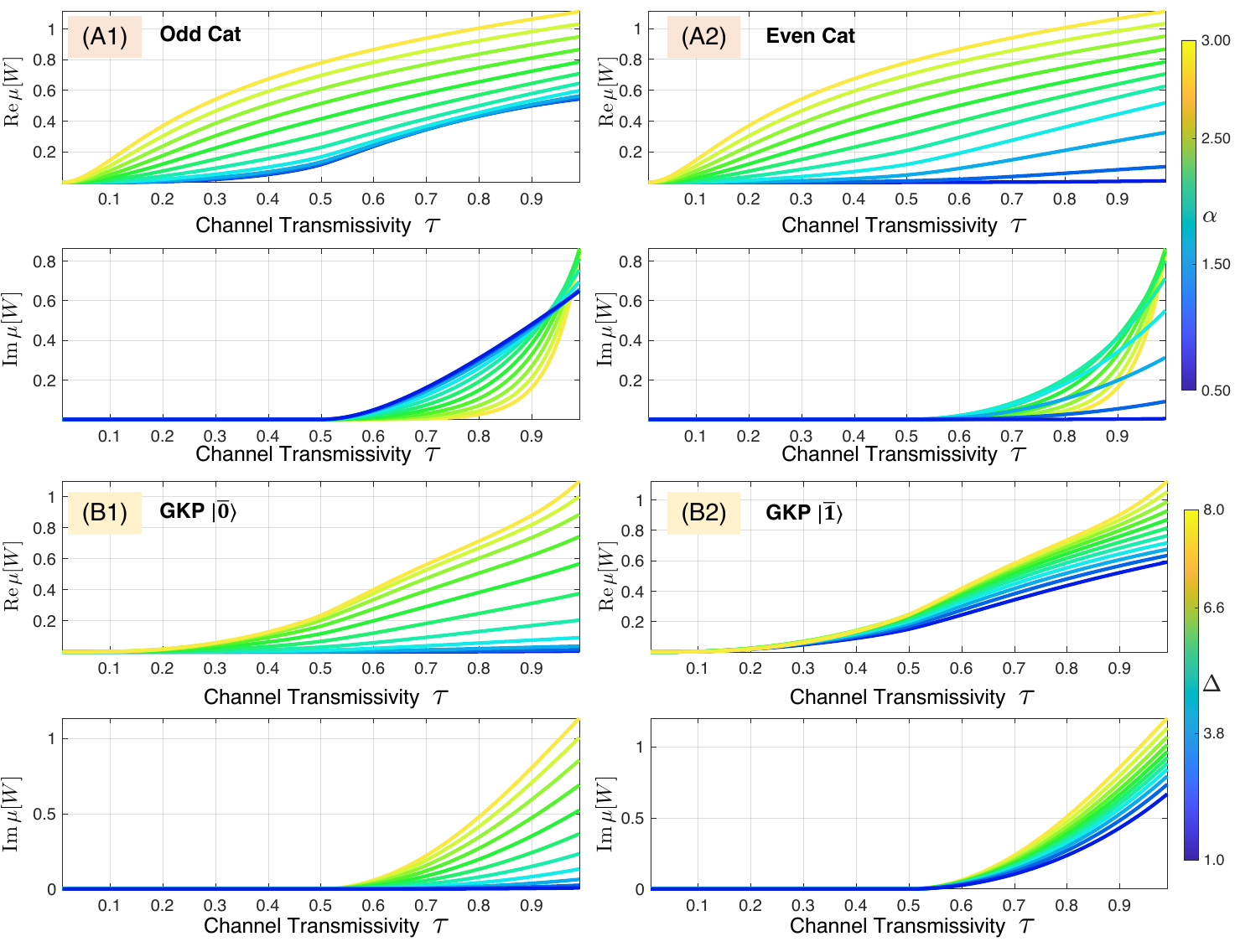}
    \caption{Non-Gaussianity measure of  (A1) odd cat, (A2) even cat, logical zero GKP (c), and logical one GKP (d) states after traversing a thermal loss channel of transmissivity $\tau$ and mean photon number $\overline{n}_B = 0.001$.}
    \label{WRECatandGKPThermal}
\end{figure*}
 

As can be seen in Fig. \ref{WRECatandGKPThermal}, $\mu[W]$ only decreases with $\tau$ for the states in question. A more in-depth examination of how the coherent amplitude of cat states and squeezing parameter of GKP states effects the measure's evolution under Gaussian channels is the subject of future work.

\section{Minimum value of the measure}
\label{sec:minimum_value}
In this paper we define a nGM which is complex and therefore, proving that its real and imaginary parts are strictly positive, is unnecessary; Indeed, one can redefine the nGM as $|\mu[W]|^2$ which is by definition always non-negative, and all properties and numerical simulations in Secs. \ref{sec:faithfulness}, \ref{sec:gauss_invariance}, \ref{sec:gauss_channels}, and \ref{sec:apps} hold without any qualitative difference. However, since we explore a new quantity, i.e., essentially a relative entropy between Wigner functions, we will undertake the task to examine the minimum value of the real and imaginary parts of $\mu[W]$.
\subsection{Real part}\label{subsec:MinReal}
We employ functional methods to find (local) minima of the functional $\mathrm{Re}\mu[W]$, and show that they coincide with Gaussian states. We consider the functional,
\begin{align}
\label{eq:FuncRe1}
    \nonumber R[W] =\text{Re}\mu[W]+\lambda_1 G^{(1)}[W] \\
    +\sum_{i=1}^{N}\mu_iG_i^{(2)}[W]+\sum_{i,j=1}^{ 2 N }   \Lambda_{ij}G_{ij}^{(3)}[W],
\end{align}
where $\text{Re}\mu[W]$ is given in Eq.~\eqref{eq:Remu}, $\{\lambda_1, \mu_i, \Lambda_{ij}\},\ i,j=1,\ldots, 2 N $ are the Lagrange multipliers to impose constraints in a functional form. The first constraint requires that the $W(\textbf{r})$ must be normalized to $1$,
\begin{eqnarray}
 \label{eq:ReCon1}G^{(1)}[W]=\int d^{2N} r \,W(\mathbf{r})=1.
\end{eqnarray}
The second set of constraints demand that $W(\mathbf{r})$ and $W_G(\mathbf{r})$ have the same displacement vector with components $\{d_i\},\ i=1,\ldots, 2 N $,
\begin{eqnarray}
\label{eq:ReCon2}G_{i}^{(2)}[W] = \int d^{2N}  r \,W(\textbf{r}) r_i=d_i.
\end{eqnarray}
The last set of constraints are functionals ensuring that $W(\mathbf{r})$ and $W_G(\mathbf{r})$ have the same covariance matrices with elements $V_{ij},\ i,j=1,\ldots, 2 N $,
\begin{align}
\label{eq:ReCon3}	G_{ij}^{(3)}[W] = \int d^{2N}  r\, W(\textbf{r}) r_i r_j=V_{ij}+d_i d_j.
\end{align}
We impose a fourth constraint,
\begin{eqnarray}
\label{eq:ReCon4}    \lim_{\mathbf{r}\rightarrow \mathbf{r}_0}W(\textbf{r})=W(\textbf{r}_0), \forall\ \mathbf{r}_0 \in \mathbb{R}^{N},
\end{eqnarray}
i.e., $W(\mathbf{r})$ must be a continuous function, as a necessary criterion for $W(\textbf{r})$ to correspond to a physical state~\cite{Katerina2009}. This last constraint is not imposed in the form of a Lagrange multiplier, instead, it will be used to filter out non-physical critical points.

Equations~\eqref{eq:ReCon2} and~\eqref{eq:ReCon3} ensure that $W_G(\mathbf{r})$ is indeed the Gaussian associate of $W(\textbf{r})$. Setting up said relation between $W(\textbf{r})$ and $W_G(\textbf{r})$ through the constraints allows one to treat $W_G(\textbf{r})$ in Eq.~\eqref{eq:FuncRe1} as being any Gaussian Wigner function, i.e., not necessarily the Gaussian associate of $W(\textbf{r})$; the constraints will impose the desired relation between $W(\textbf{r})$ and $W_G(\textbf{r})$. Therefore, when performing the functional derivative on Eq.~\eqref{eq:FuncRe1}, all terms containing only $W_G(\mathbf{r})$, will disappear. Therefore, we can rewrite Eq.~\eqref{eq:FuncRe1} as,
\begin{align}
\label{eq:FuncRe2}
    \nonumber R[W] = -\text{Re}h[W]+\lambda_1 G^{(1)}[W] \\
    +\sum_{i=1}^{N}\mu_iG_i^{(2)}[W]+\sum_{i,j=1}^{ 2 N }  \Lambda_{ij}G_{ij}^{(3)}[W].
\end{align}
In Appendix \ref{app:minReal} we perform the first functional derivative and we find all critical points, (i.e., functions $W(\mathbf{\mathbf{r}})$) that respect the constraints of Eqs.~\eqref{eq:ReCon1}--\eqref{eq:ReCon3}. These critical points are the Gaussian associate of $W(\mathbf{r})$, i.e., $W_G(\mathbf{r})$, and the \emph{partially flipped} Gaussian associate. The partially flipped Gaussian associate is any function $W_{pfG}(\mathbf{r})$ that takes on positive and negative values but in such a way that $\vert W_{pfG}(\mathbf{r})\vert=W_G(\mathbf{r})$ is the Gaussian associate of $W(\mathbf{r})$.  $W_{pfG}(\mathbf{r})$ can be ruled out by enforcing the constraint of Eq. \eqref{eq:ReCon4}, as it must be discontinuous somewhere for $\vert W_{pfG}(\mathbf{r})\vert=W_G(\mathbf{r})$ to hold. Note that the \emph{completely flipped} Gaussian associate, $W_{cfG}(\mathbf{r}) = - W_G(\mathbf{r})$, had already been ruled out prior to enforcing the constraint of Eq. \eqref{eq:ReCon4}, since it violates the normalization condition of Eq. \eqref{eq:ReCon1}.  Therefore, the only critical point that corresponds to a physical state is $W_G(\mathbf{r})$.

Subsequently, we calculate the second functional derivative of Eq. \eqref{eq:FuncRe2}, and we find that $W_G(\mathbf{r})$ corresponds to a minimum for Eq.~\eqref{eq:FuncRe2}. When the functional of Eq. \eqref{eq:FuncRe2} is evaluated for $W_G(\textbf{r})$, gives $\text{Re}\mu[W_G]=0$. The partially flipped Gaussian Wigner functions do not correspond to neither minima nor maxima.

From these observations, we conclude that the Gaussian state with first and second order moments defined by \eqref{eq:ReCon2}-\eqref{eq:ReCon3} is the only local minimum of the functional $\mathrm{Re}\mu[W]$. 
If we were to consider only Wigner-positive states, we could at this point establish that it is also a global minimum, from the concavity of $\mathrm{Re}h$ over Wigner-positive states (which follows from the concavity of $\varphi(x)=-x\ln x$ over $\mathbb{R}_+$). However, we cannot use that argument anymore when Wigner-negative states come into the picture, as the functional $\mathrm{Re}h$ then becomes neither concave nor convex (since $\varphi(x)=-x\ln\vert x\vert$ is neither concave nor convex over $\mathbb{R}$). Having a single physical minimum, does not exclude the possibility of $\text{Re}\mu[W]$ being unbounded from below. However, we expect (from numerics of Sections \ref{sec:apps} and \ref{sec:GaussianChannelNumerics}, where all $\text{Re}\mu[W]$ are non-negative) that some physicality condition, which is non-tractable to include in the analytic functional optimization, could render $\text{Re}{\mu}[W]$ bounded from below by zero.

\subsection{Imaginary part}\label{subsec:MinImag}
From Eq.~\eqref{eq:Immu}, the imaginary part is always non-negative, i.e., $\text{Im }\mu[W]\geq 0$, as it is proportional (with a positive constant) to modulo the negative volume of the Wigner function. It is worthwhile to note that this is a direct consequence of choosing a negative $k$ as the branch of the complex logarithm in Sec.~\ref{sec:definition}.

\section{Applications}
\label{sec:apps}
Bosonic quantum states that satisfy the requirements of being a qubit are of particular interest in tasks of quantum information processing. Various photonic encodings of the qubit find applications in quantum communication and computation. We analyse (1) the single rail qubit encoding~\cite{Knill2001-yg}, (2) the Schr\"{o}dinger cat state based qubits~\cite{Cochrane1999, Ralph2003}, and (3) the Gottesman-Kitaev-Preskill (GKP) qubits~\cite{Gottesman2001} which have seen renewed interest due to their error correction properties.

\subsection{Single-rail photonic qubits}
\label{sec:apps_qubit}
We study the behavior of the measure for the single-rail encoded photonic qubit defined as $\ket{\bar{0}}=\ket{0}$ and $\ket{\bar{1}}=\ket{1}$, where $\ket{0}, \ket{1}$ are Fock states. In the chosen basis, the logical-0 has a zero nGM, whereas the logical-1 has a non-zero nGM. Exploration of the complete set of states can be done in a parameterized way as follows: we start with a qubit density operator, parametrized as $\rho(r)=r\outprod{\bar{0}}+(1-r)\outprod{\bar{1}}$ where $r\in[0,1]$ and apply the qubit rotation gate given as, 
\begin{align}
    U(\theta,\varphi)=\begin{pmatrix}
    \cos \theta/2 & e^{i\varphi} \sin \theta/2 \\
    e^{-i\varphi} \sin \theta/2 & -\cos\theta/2
    \end{pmatrix},
    \label{eq:qubit_unitary}
\end{align}
where varying $\theta\in (0,\pi)$ and $\varphi \in (0,2\pi)$, in principle let us explore the complete space of quantum states spanned by these basis vectors. Since states of varying purity are allowed (i.e. by tuning $r\in[0,1]$) the state space may be thought of as a Bloch ball. We simplify the analysis by considering only $\varphi=0$, i.e.\ density operators that fill a hemispherical slice of the Bloch ball. This simplification does not harm generality since states lying at the same radial and azimuthal position of the Bloch ball (i.e. same $r,\theta$) are equivalent up to a rotation in phase space. Consequently, they evaluate to the same value of the nGM. Therefore, we write,
\begin{eqnarray}
U(\theta,\varphi) \equiv U(\theta).
\end{eqnarray}

\begin{figure*}[!]
    \includegraphics[width=\textwidth]{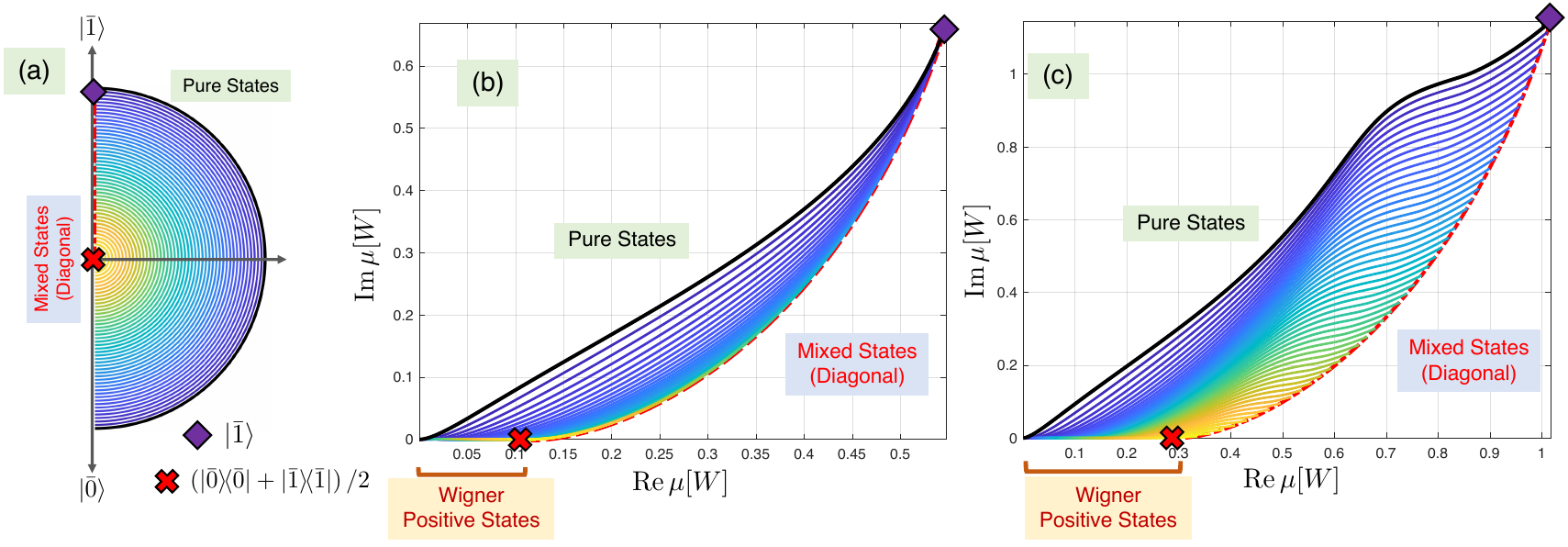}
    \caption{Non-Gaussianity measure evaluated for qubit states in the Fock basis encoding. States are initialized as $\rho_i=r\outprod{\bar{0}}+(1-r)\outprod{\bar{1}}$ and manipulated by applying the unitary $U(\theta)$ from Eq.~\eqref{eq:qubit_unitary}; the different colored tracks correspond to different values of $r$ (a) Structure of the Bloch ball; the corresponding set of measure values when (b) $\ket{\bar{0}}=\ket{0}; \ket{\bar{1}}=\ket{1}$ (c) $\ket{\bar{0}}=\ket{0}; \ket{\bar{1}}=\ket{2}$. The black line corresponds to pure qubit states, the red dashed line corresponds to states of the form $\rho=r\outprod{\bar{0}}+(1-r)\outprod{\bar{1}}; r\in[0,0.5]$. The logical-1 state (purple diamond) and completely mixed state (red cross) are marked as well.}
    \label{FockQStates}
\end{figure*}
 

Figure~\ref{FockQStates}(a) depicts tracks in the Bloch ball's hemisphere corresponding to varying $r$ (marked by different colors). The corresponding values of the measures are shown in  Fig.~\ref{FockQStates}(b) where each $\rho(r); r\in[0.5,1]$ with $U(\theta); \theta\in(0,\pi]$ applied has the corresponding colored line. Hence starting from $\rho(r)$, the final state under the present choice of evolution is $\rho(1-r)$ since $U(\pi/2)$ is the Pauli $\hat{X}$ gate. Since $r\in[0.5,1]$ for some values of $\theta$ yields passive states (i.e. Fock-diagonal states whose eigenvalues are in non-increasing order), their measures are real-valued (marked as Wigner positive states). The measures of the $\ket{\bar{1}}$ state (purple diamond) and the maximally mixed state (red cross) are marked for clarity. 

We note an upper envelope to the set of the evaluated measure values which corresponds to the pure states (black line). The set also has a right envelope which corresponds to states of the form $\rho=r\outprod{\bar{0}}+(1-r)\outprod{\bar{1}}; r\in[0,0.5]$ (red dashed line). As highlighted earlier, the complementary cases, i.e. $r\in[0.5,1]$, yields passive states.
Similar trends are observed for the case where we assume $\ket{\bar{1}}=\ket{2}$ in Fig.~\ref{FockQStates}. 

It is important to note that states with non-diagonal density matrices might have a real $\mu[W]$ since the set of Wigner positive states has elements which are non-passive. The ability to distinguish between Gaussian, positive non-Gaussian (passive), and non-Gaussian states with negative volume simultaneously is unique to our measure. One would need to take into account multiple other non-Gaussianity measures concurrently to achieve the same discern-ability as our measure.

\subsection{Application to engineering Schrödinger cat states}
\label{sec:apps_cat}
In this section we showcase our measure's potential as a computational tool to aid photonic state engineering, by comparing the values of our measure for a target state and a state produced in the engineering protocol of Ref.~\cite{Andrew2021}. In this engineering protocol, a collection of single-mode squeezed vacuum states are sent through a general passive Gaussian unitary made of beam-splitters and phase shifters. At the output of the unitary all but one of the modes is detected by a photon number resolving detector. The squeezing parameters of the input states as well as the transmissivities and phases of the general passive unitary are numerically optimized with respect to the final single mode output state's fidelity to a target state. For this example, the target states are Schrödinger cat states~\cite{Cochrane1999, Ralph2003},
\begin{align}
\label{eq:Cats}
    \ket{\Phi_t^{\pm}} = \frac{\ket{\alpha} \pm \ket{-\alpha}}{\sqrt{2(1\pm e^{-2|\alpha|^{2}})}}.
\end{align}
Here $\ket{\pm \alpha} \approx e^{-\frac{1}{2}|\alpha|^{2}}\sum_{n = 0}^{n_{c}} \frac{(\pm \alpha)^{n}}{\sqrt{n!}} \ket{n}$ is a photon number-truncated coherent state with a cutoff Fock number, $n_c = 40$ is chosen such that for a coherent amplitude up to $|\alpha| = 3$, the state is supported with high precision. 
Without loss of generality, we work with $\alpha \in \mathbb{R}$ as the phase $\arg\alpha$ for $\alpha\in\mathbb{C}$ is imparted by Gaussian unitary operation which ensures the invariance of the nGM. 

In Fig.~\ref{CatStates} we plot $\mu[W]$ for even, $\ket{\Phi_t^+}$, and odd, $\ket{\Phi_t^-}$, cat states. It is straightforward to show from Eq.~\eqref{eq:Cats} that at $\alpha = 0$, $\ket{\Phi_t^{+}} = \ket{0}$ and $\ket{\Phi_t^{-}} = \ket{1}$. This fact is reflected in Fig.~\ref{CatStates} by the even (odd) cat state's zero (non-zero) value for the real and imaginary parts of $\mu[W]$ at $\alpha = 0$. As $\alpha$ increases the even and odd cat states' measures become progressively indistinguishable. 

In Fig.~\ref{EngineeredCatStates}a we plot $\mu[W]$ for the ``circuit" states generated by the scheme in Ref.~\cite{Andrew2021} with the highest fidelity to their corresponding target states $\ket{\Phi_t^{+}}$ (see Ref.~\cite{Andrew2021}, Fig. 4 therein) along with $\mu[W]$ for these target states. In Fig. \ref{EngineeredCatStates}b we plot the difference in the target and circuit states' values for $\text{Re}\mu[W]$ and $\text{Im}\mu[W]$ as functions of the coherent amplitude of the target state  $\alpha$. Listed next to each plot point is the fidelity between the target and circuit state. Keep in mind we chose our target and circuit states to be pure, so that any non-Gaussianity reflected in the measure must arise from quantum effects and not a non-Gaussian mixture of Gaussian states.

 \begin{figure}[ht]
    \includegraphics[width=\columnwidth]{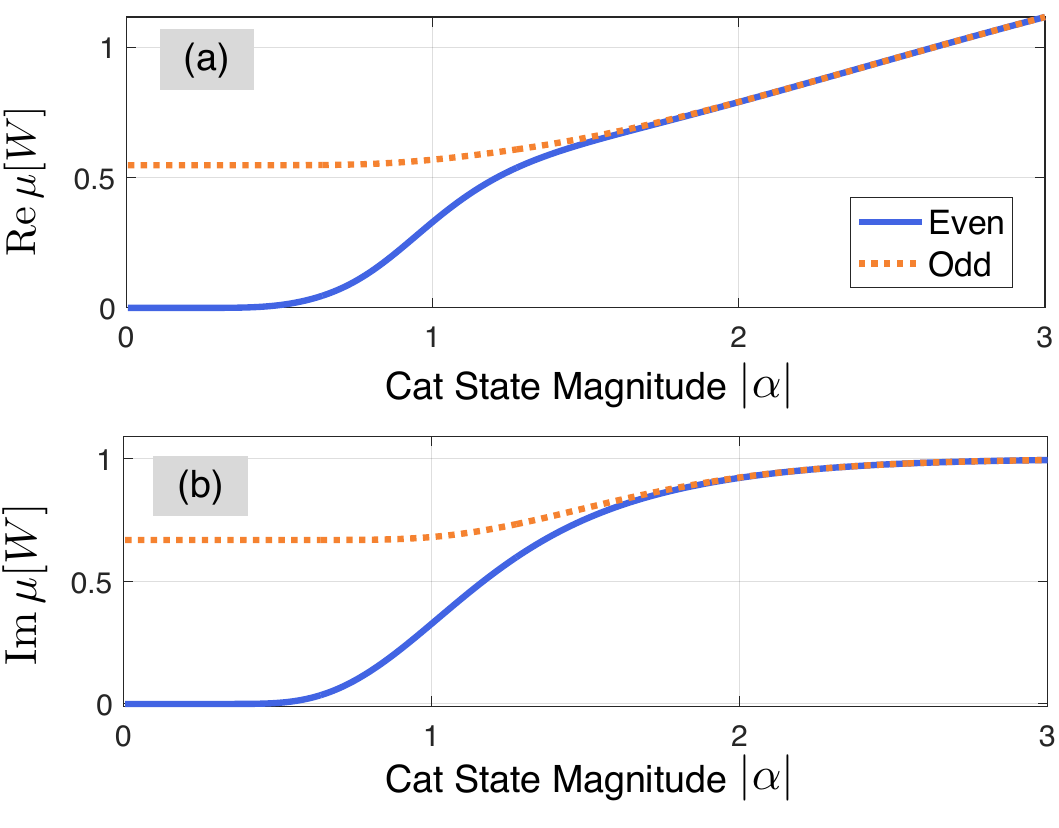}
    \caption{The (a) real and (b) imaginary parts of $\mu[W]$ evaluated for even (solid blue) and odd (dashed orange) Schr\"{o}dinger cat states as defined in Eq. \eqref{eq:Cats}. }
    \label{CatStates}
\end{figure}

From the two plots in Fig.~\ref{EngineeredCatStates} we see that the higher the fidelity between the target and the circuit states, the closer in $\mu[W]$ they are. Now, it must be pointed out that high fidelity between target and circuit states does not always mean similar $\mu[W]$ values and our results in Fig.~\ref{EngineeredCatStates} should in no way be taken as evidence to a wider connection between fidelity and our measure. Fig.~\ref{EngineeredCatStates} does provide evidence that fidelity performed well as a figure of merit for engineering these cat states. That is, not only do the target and circuit states have high fidelity, they also have a similar shape in phase space. The scenario where fidelity does not perform well as a figure of merit is where our measure should prove most valuable, something we briefly address in Section \ref{sec:concl} and leave futher exploration of to future work.


 \begin{figure}[!ht]
    \centering
    \includegraphics[width=0.9\columnwidth]{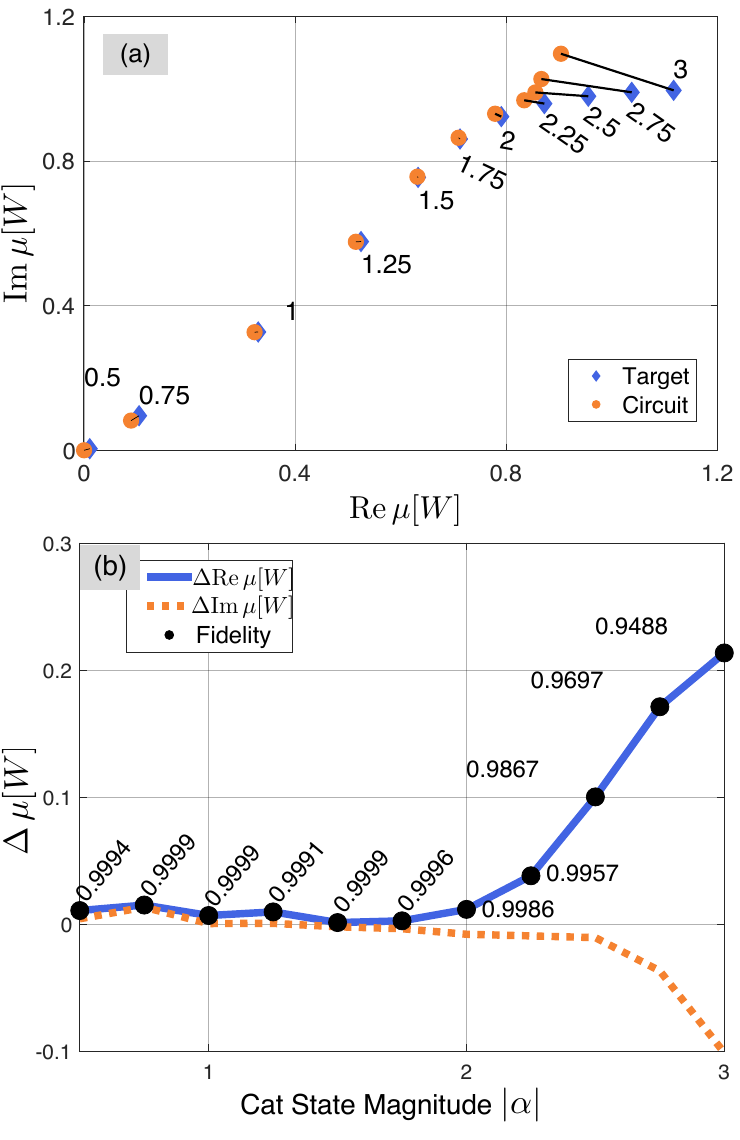}
    \caption{(a) $\mu[W]$  evaluated for target even Schr\"{o}dinger cat states  $\ket{\Phi_t^{+}}$ (blue diamonds) and ``circuit" states (orange circles) generated by the scheme in Ref.~\cite{Andrew2021} with the highest fidelity to their corresponding target state. Each circuit state point is connected to its corresponding target state point by a black line and listed next to each circuit-target state point pair is the coherent amplitude of the target state $\alpha$. (b) Difference in the target and circuit states' values for the real(imaginary) part of our measure $\Delta\text{Re}\mu[W]$ ($\Delta\text{Im}\mu[W]$) vs. the coherent amplitude of the target state $\alpha$. Listed next to each plot point is the fidelity between the target and circuit state for that particular $\alpha$. We note that while the fidelities are listed next to the points on the line for $\Delta\text{Re}\mu[W]$ (solid blue) they are meant to apply to the complementary point on the line for $\Delta\text{Im}\mu[W]$ (dashed orange) as well.}
    \label{EngineeredCatStates}
\end{figure}

\subsection{Application to engineering photonic Gottesman-Kitaev-Preskill states}\label{sec:apps_GKP}
The Gottesman-Kitaev-Preskill (GKP)~\cite{Gottesman2001} states are well known for their utility in quantum communication, error correction and computing. Here we focus on the finite squeezed square-grid GKP states~\cite{Terhal2016,Seshadreesan2022}, which do not have singularities and infinite extent on the phase space. We choose the following definition for the GKP logical states,
\begin{subequations}
\begin{align}
    &\ket{\bar{0}}_{\mathrm{GKP}}=\sum_{t=-\infty}^{\infty} e^{-\frac{\pi}{2}{\Delta}^2 (2t)^2} \hat{D}(2t\sqrt{\pi})\hat{S}(\xi)\ket{\mathrm{0}}\\
    &\ket{\bar{1}}_{\mathrm{GKP}}=\sum_{t=-\infty}^{\infty} e^{-\frac{\pi}{2} {\Delta}^2 (2t+1)^2} \hat{D}((2t+1)\sqrt{\pi})\hat{S}(\xi)\ket{\mathrm{0}},
\end{align}
    \label{eq:GKP_def}
\end{subequations}
where $\hat{D}$ is the displacement operator, $\hat{S}$ is the single-mode squeezing operator, and $\xi=-\ln \Delta$ is the squeezing parameter. The corresponding Wigner functions may be derived by calculating the symmetrically ordered characteristic function $\chi_W(\eta,\eta^*)=\mathrm{tr} [\hat{D}(\eta)\outprod{\bar{L}}],\ \bar{L}=0,1,$ and evaluating its Fourier transform.  The detailed derivation of the state Wigner function is given in Appendix~\ref{app:GKP_wigner}. The nGM of these states are shown in Fig.~\ref{fig:GKPstates}.

\begin{figure}[!]
    \centering
    \includegraphics[width=\columnwidth]{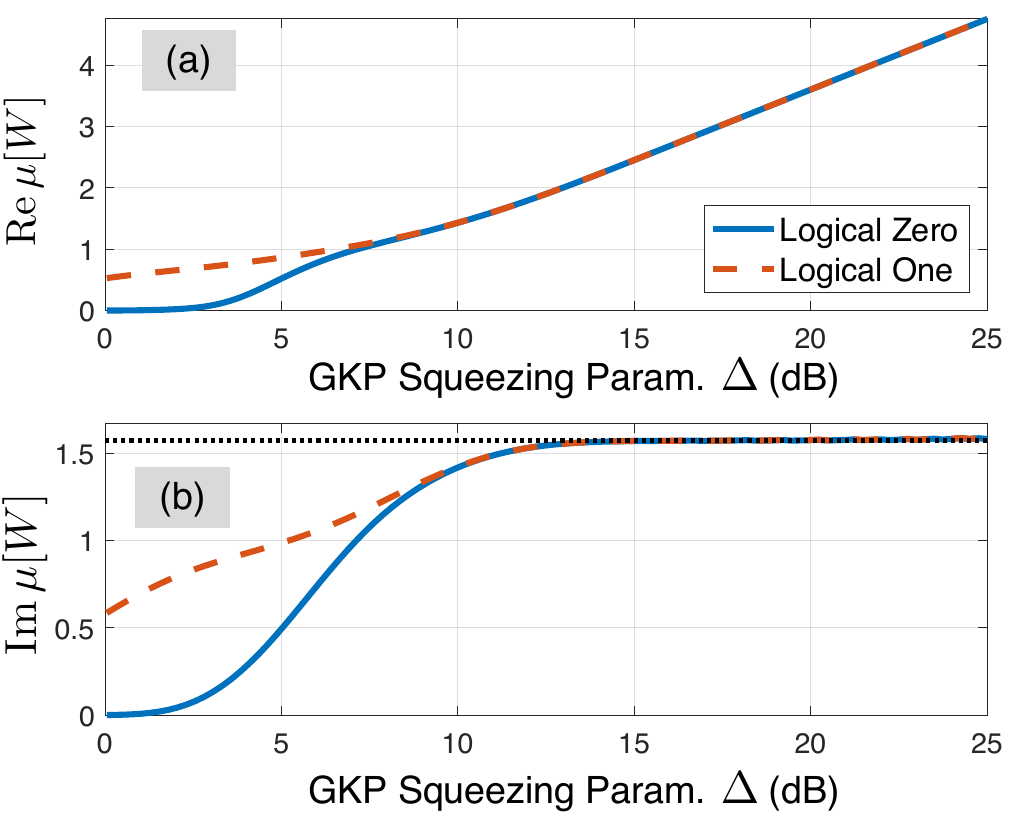}
    \caption{Non-Gaussianity measure evaluated for the finitely squeezed GKP states (Eq.~\eqref{eq:GKP_def}) for the logical zero $\ket{\bar{0}}_{\mathrm{GKP}}$ (blue solid) and logical one $\ket{\bar{1}}_{\mathrm{GKP}}$ (red dashed) states w.r.t.\ the squeezing parameter $\Delta$ (dB). }
    \label{fig:GKPstates}
\end{figure}

We note the following trends for the evaluated nGM---(1) the imaginary part of $\mu[W]$ is upper bounded for both $\ket{\bar{0}}_{\mathrm{GKP}}$ and $\ket{\bar{1}}_{\mathrm{GKP}}$, showing numerical equivalence for $\Delta\gtrsim 10$ dB and,  (2) the real part of the measure monotonically increases with $\Delta$ for both logical states showing numerical equivalence for  $\Delta\gtrsim15 $ dB.
The upper bound to $\mathrm{Im}\mu[W]$ arises due to the Wigner functions having equivalent negative regions. The upper bound ($=\pi/2$) is a result of the maximum negative volume of the GKP state which is shown to be $1/2$~\cite{Yamasaki2020}. For $\Delta\approx 0$ (low squeezing limit) we note that $\ket{\bar{0}}_{\mathrm{GKP}}\equiv \ket{0}$ (vacuum) and $\ket{\bar{1}}_{\mathrm{GKP}}\equiv\ket{\beta}+\ket{-\beta}; \beta= \sqrt{\pi}$ (Schrödinger
cat state). Hence the measure is non-zero for $\ket{\bar{1}}_{\mathrm{GKP}}$. For the  $\Delta>15$ dB regime (high squeezing), the states are expected to be displacement invariant (i.e. $\ket{\bar{1}}_{\mathrm{GKP}}=\hat{D}(\sqrt{\pi})\ket{\bar{0}}_{\mathrm{GKP}}$), thus evaluating to the same value of the measure. While it is clear that the measure for the two logical states should deviate as $\Delta \rightarrow 0 $, since the states no longer differ by a displacement, minimal insight about the physical implications of real part of the measure is currently available so the exact functional dependence of $\textrm{Re}\mu[W]$ on $\Delta$ in the low squeezing limit is not intuitive.
In Ref. \cite{Tzitrin2020} the authors introduce a photonic state engineering system that consists of sending $N$ displaced squeezed states through an $N$-mode interferometer and preforming PNR detection on $N-1$ of the modes. They denote the PNR measurement pattern by $\overline{n}$. For a given $\overline{n}$ they optimize the displacement and squeezing of the input states and the parameters of the $N$-mode interferometer to target ``approximate" GKP states. The authors define ``approximate" GKP states to be those of Eqs.~\eqref{eq:GKP_def} but with Fock support truncated at some number $n_{\text{max}}$. In Table II of Ref.~\cite{Tzitrin2020} the authors provide the fidelity between their target, approximate logical zero GKP states with $\Delta = 10$dB, and circuit states for varying $n_{\text{max}}$, $N$, and $\overline{n}$. The Fock coefficients of these target and circuit states can be found in Ref. \cite{Sabapathy2020}. We restate the results of the $n_{\text{max}} = 12$ section of aforementioned table in Table \ref{EngineeredGKPTable} with the difference in the target and circuit states’ values for $\text{Re}\mu[W]$ and $\text{Im}\mu[W]$ listed as well. Once again, our measure acts as a sort of sanity check, confirming that not only do the engineered states have high fidelity to the target but also have the desired GKP phase space characteristics.

\begin{table}[h]
\begin{tabular}{P{0.5cm} P{1.75cm} P{1.5cm} P{1.75cm} P{1cm}}
    & & $n_{\max}=12$ & &\\
    \toprule
    $N$ & 1-Fidelity & $\Delta $Re$\mu[W]$ & $\Delta $Im$\mu[W]$ & $\bar{n}$\\
    \midrule \midrule
    2 & 0.35 & 1.33 & -1.43 & (12) \\
    \midrule
     3 & $3\times10^{-3}$ & $-0.001$ & 0.032 & (5,7) \\
     \midrule
     4 & $4\times10^{-8}$ & $8.62\times10^{-4}$ & 0.003 & (3,3,6) \\
      & $2\times10^{-5}$ & $8.59\times10^{-4}$ & 0.001 & (2,4,6) \\
    \midrule
     5 & $7\times10^{-9}$ & $7.69\times10^{-4}$ & 0.003 & (1,1,3,7) \\
      & $7\times10^{-8}$ & $7.87\times10^{-4}$ & 0.003 & (1,2,3,6) \\
      \bottomrule
\end{tabular}
\caption{Results for optimizing the state engineering system in Ref. \cite{Tzitrin2020} to target approximate logical zero GKP states with $\Delta = 10$ dB and $n_{\text{max}}= 12$ which is equivalent to Eq. (\ref{eq:GKP_def}a) with Fock support truncated at $12$. Listed for each case of number of input modes, $N$, and PNR measurement pattern, $\overline{n}$, is the fidelity between the target and circuit state, presented as $1-$Fidelity, and the difference in the target and circuit states’ values for $\text{Re}\,\mu[W]$ and $\text{Im}\,\mu[W]$.}
\label{EngineeredGKPTable}
\end{table}

\section{Conclusions}\label{sec:concl}
In this work we ventured to give an application-flavored interpretation of the recently introduced quantum Wigner entropy~\cite{Zach2021}. To this end, we introduced the WRE and defined $\mu[W]$ such that under the condition of Eq.~\eqref{eq:condition_gaussian_channel} behaves as a proper non-Gaussianity measure. Our measure can serve as a computational tool to aid in assessing the quality of non-Gaussian states, generated by partial non-Gaussian measurements on Gaussian states. Indeed, the nGM introduced in this work holds information on the negative volume of the Wigner function in a naturally emerging way. At the same time, $\text{Re}\mu[W]$ provides an extra layer of information on the non-Gaussian character of the underlying state.  One would need the combination of negativity and a relative entropy measure, such as the quantum relative entropy \cite{Genoni2008}, to simultaneously distinguish between Gaussian states, non-Gaussian states with zero negativity, and non-Gaussian states with non-zero negativity. The novelty of our measure is that it is able to distinguish between these three types of states on its own. \color{black}

As pointed out in Section \ref{sec:apps_cat}, high fidelity between a circuit and target state does not always mean the circuit state will possess the desired phase space characteristics of its target. It is in these circumstances where we think our measure will flourish. Specifically, we anticipate that our work will spark interest in non-Gaussian state engineering in the following way: Optimizing the Gaussian state under partial photon detection and the (partial) photon number pattern~\cite{Su2019}, to generate a state whose $\mu[W]$ is close to $\mu[W_{\text{GKP}}]$, where $W_{\text{GKP}}$ is the Wigner function of the physical (i.e., finite squeezing) GKP state. Examination and comparison of the error-correcting properties of the measure-optimized states (in contrast to fidelity-optimized states) is an open question for forthcoming work.

Moreover, for future theoretical endeavors, complete understanding of the properties and physical meaning of $\text{Re}\mu[W]$ is of potential interest. It is worthwhile to examine the potential use of $\mu[W]$ within the context of a non-Gaussianity resource theory (i.e. from the point of view of monotones). Lastly, we believe that the conditions of Eqs.~\eqref{eq:condition_gaussian_channel} and~\eqref{eq:SuffAndNec} are interesting in their own right; non-negative Wigner functions are known to satisfy them, however it is interesting that a plethora of partialy-negative Wigner functions (e.g. cat states and GKP states under thermal loss) also seem to numerically satisfy them, as the strong evidence presented in this work suggests. Therefore, a complete understanding of which states under the action of which Gaussian channel, respect said conditions is another possible future research path.

\begin{acknowledgements}
 The authors thank Saikat Guha (Univ. of Arizona), Eneet Kaur (Univ. of Arizona). S.C. thanks Hongji Wei (Univ. of Arizona) for useful discussions on numerical methods. C.N.G., A.J.P., and S.C. acknowledge funding support from the National Science Foundation, FET, Award No. 2122337. P.D. acknowledges the Nicolaas Bloembergen Graduate Student Scholarship in Optical Sciences and the National Science Foundation (NSF) Engineering Research Center for Quantum Networks (CQN), awarded under cooperative agreement number 1941583, for supporting this research. Z.V.H. acknowledges support from the Belgian  American  Educational  Foundation. Numerical simulations were based upon High Performance Computing (HPC) resources supported by the University of Arizona TRIF, UITS, and Research, Innovation, and Impact (RII) and maintained by the UArizona Research Technologies department.
\end{acknowledgements}

\section*{Data Availability Statement}
The data that support the findings of this study are openly available in Complex-Non-Gaussianity-Measure-Data, at https://github.com/ajpizzimenti/Complex-Non-Gaussianity-Measure-Data.

\newpage
\onecolumngrid
\appendix

\section{Evolution of the non-Gaussianity measure in Gaussian channels}
\label{apd:evolution_measure_gaussian_channels}

In this Appendix, we derive a sufficient condition for the non-Gaussianity measure to decrease under the action of any Gaussian channel.
The condition is then shown to be always satisfied for states with non-negative Wigner functions.
Note that it is proven in \cite{Albarelli2018,Zhuang2018} that the negative volume can only decrease under the action of a Gaussian channel, so that we only condiser the real part of the non-Gaussianity measure here, \textit{i.e.} $\mathrm{Re}\mu[W]$.

\subsection{Multimode bosonic Gaussian channels}

The mathematical description of multimode Gaussian channels is provided in Ref. \cite{Caruso2008-dt}. 
In Wigner space, multimode Gaussian channels act as the succession of a rescaling operation and a convolution with a Gaussian distribution (for completeness we should also include a displacement, but we omit it here since it has no effect on our measure).
We define the multimode rescaling operator $\mathcal{L}_{\mathbf{S}}$ and the convolution as follows:
\begin{align}
    \left(A\ast B\right)(\mathbf{r})
    &=
    \iint A(\mathbf{r}-\mathbf{r}')B(\mathbf{r}') d^{2N}r'.
    \\
    \mathcal{L}_\mathbf{S}\left[W\right](\mathbf{r})
    &=
    \frac{1}{\vert\det\mathbf{S}\vert}W\left(\mathbf{S}^{-1}\mathbf{r}\right)
\end{align}
Note that when the rescaling matrix is proportional to the identity (so that $\mathbf{S}=s\mathbf{I}$), we will simply note the rescaling operator as $\mathcal{L}_s$ (in place of $\mathcal{L}_{s\mathbf{I}}$).
Using these two operations, we can then express any Gaussian channel $\mathcal{M}$ as:
\begin{align}
\mathcal{M}
    \left[W\right]
    &=\mathcal{L}_{\mathbf{S}}\left[W\ast \mathcal{U}\right],
\end{align}
where $\mathbf{S}$ is the rescaling matrix and $\mathcal{U}$ is some multimode Gaussian distribution.
Note that in order to ensure the physicality of the channel $\mathcal{M}$, $\mathbf{S}$ and $\mathcal{U}$ should satisfy some conditions (see \cite{Caruso2008-dt}).
As an example, a noisy lossy channel (with transmittance $\eta$ and thermal environment of $\bar{n}$ photons) corresponds to choosing $\mathbf{S}=\sqrt{\eta}\mathbf{I}_2$ and ${\mathcal{U}=\mathcal{L}_{\sqrt{(1-\eta)/\eta}}[W_{\bar{n}}]}$ where $W_{\bar{n}}=\exp(-(x^2+p^2)/(2\bar{n}+1))/(\pi(2\bar{n}+1))$ is the Wigner function of a thermal state.
Similarly, a noisy amplifier corresponds to $\mathbf{S}=\sqrt{g}\mathbf{I}_2$ and ${\mathcal{U}=\mathcal{L}_{\sqrt{(g-1)/g}}[W_{\bar{n}}]}$.

It is a simple derivation to show that the rescaling operator adds a constant to the differential entropy of $W$:
\begin{align}
    h\left[\mathcal{L}_{\mathbf{S}}\left[W\right]\right]  
    =
    h[W]
    +
    \ln(\vert\det\mathbf{S}\vert).
\end{align}
Since the non-Gaussianity is a difference of two entropies (and since the rescaling acts identically on $W$ and its Gaussian associate), it directly follows that the non-Gaussianity measure is invariant under rescaling:
\begin{align}
    \mu[\mathcal{L}_\mathbf{S}\left[W\right]]
    =
    h[\mathcal{L}_\mathbf{S}\left[W\right]]
    -
    h[\mathcal{L}_\mathbf{S}\left[W_G\right]]
    =
    h[W]
    -
    h[W_G]
    =
    \mu[W].
\end{align}
As a consequence, the evolution of the non-Gaussianity measure under a Gaussian channel is solely affected by the convolution with the Gaussian distribution.

\subsection{Infinitesimal Gaussian convolution}

We are now going to show that it is sufficient to prove that the measure is decreasing under an infinitesimal Gaussian convolution.
Indeed, let us observe that any Gaussian distribution $\mathcal{U}$ (with covariance matrix $\mathbf{U}$) can be decomposed into an arbitrary number of convolutions (assuming $1/\varepsilon\in\mathbb{N}$):
\begin{align}
    \mathcal{U}
    =
    \underbrace{
   \mathcal{L}_{\sqrt{\varepsilon}}
   \left[\mathcal{U}\right]
   \ast
   \mathcal{L}_{\sqrt{\varepsilon}}
   \left[\mathcal{U}\right]
   \ast
   \cdots
   \mathcal{L}_{\sqrt{\varepsilon}}
   \left[\mathcal{U}\right]
   }_{1/\varepsilon\text{ times}}.
\end{align}
This is a consequence of the fact that covariance matrices add up under convolution, and that the rescaling operator $\mathcal{L}_{\sqrt{\varepsilon}}$ applies a Gaussian distribution with covariance matrix $\mathbf{U}$   onto another Gaussian distribution with covariance matrix $\varepsilon \mathbf{U}$.
As it appears, when $\varepsilon$ tends towards zero, the Gaussian distribution $\mathcal{U}$ can be decomposed into (an infinite number of) convolutions of arbitrarily narrow Gaussian distributions (with covariance matrices $\varepsilon  \mathbf{U}$).
Proving that the non-Gaussianity measure increases in the limiting case $\varepsilon\rightarrow 0$ would then imply that it increases for any multiple of $\varepsilon  \mathbf{U}$, and in particular for $\mathcal{U}$ (having covariance matrix $\mathbf{U}$).
The limiting case $\varepsilon\rightarrow 0$ corresponds to taking the derivative of the measure, $\textit{i.e.}$:
\begin{align}
    \frac{\mathrm{d}}{\mathrm{d}\varepsilon}
    \mathrm{Re}\mu\left[W\ast\mathcal{L}_{\sqrt{\varepsilon}}\left[\mathcal{U}\right]\right]
    \Big\vert_{\varepsilon=0}
    \leq 0.
    \label{eq:measure_inequality_gaussian_channels}
\end{align}
We are going to focus on the latter condition in the following of this Appendix.

In the following, we use the fact the the convolution of two Gaussians is a Gaussian and that covariance matrices add up under convolution.
We use the notation $h_G[W]$ to denote the entropy of the Gaussian associate of $W$.
\begin{align}
    \mu [W\ast\mathcal{L}_{\sqrt{\varepsilon}}\left[\mathcal{U}\right]]
    &=
    h_G[W\ast\mathcal{L}_{\sqrt{\varepsilon}}\left[\mathcal{U}\right]]
    -
    h[W\ast\mathcal{L}_{\sqrt{\varepsilon}}\left[\mathcal{U}\right]]
    \\
    &=
    h[W_G\ast\mathcal{L}_{\sqrt{\varepsilon}}\left[\mathcal{U}\right]]
    -
    h[W\ast\mathcal{L}_{\sqrt{\varepsilon}}\left[\mathcal{U}\right]]
    \\
    &=
    N\ln(2\pi e)+\frac{1}{2}\ln\Big(\det(\mathbf{V}+\varepsilon \mathbf{U})\Big)
    -
    h[W\ast\mathcal{L}_{\sqrt{\varepsilon}}[\mathcal{U}]]
\end{align}

We now take the derivative with respect to $\varepsilon$ on both sides.
\begin{align}
    \frac{\mathrm{d}}{\mathrm{d}\varepsilon}
    \mu[W\ast\mathcal{L}_{\sqrt{\varepsilon}}\left[\mathcal{U}\right]]
    &=
    \frac{1}{2}
    \frac{\mathrm{d}}{\mathrm{d}\varepsilon}
    \ln\Big(\det(\mathbf{V}+\varepsilon\mathbf{U})\Big)
    -
    \frac{\mathrm{d}}{\mathrm{d}\varepsilon}
    h[W\ast\mathcal{L}_{\sqrt{\varepsilon}}[\mathcal{U}]]
    \nonumber
    \\
    &=
    \frac{1}{2}
    \left(\det(\mathbf{V}+\varepsilon\mathbf{U})\right)^{-1}
    \frac{\mathrm{d}}{\mathrm{d}\varepsilon}
    \det(\mathbf{V}+\varepsilon\mathbf{U})
    -
    \frac{\mathrm{d}}{\mathrm{d}\varepsilon}
    h[W\ast\mathcal{L}_{\sqrt{\varepsilon}}[\mathcal{U}]]
    \nonumber
    \\
    &=
    \frac{1}{2}
    \left(\det(\mathbf{V}+\varepsilon\mathbf{U})\right)^{-1}
    \det(\mathbf{V}+\varepsilon\mathbf{U})
    \Tr\left[
    (\mathbf{V}+\varepsilon\mathbf{U})^{-1}
    \frac{\mathrm{d}}{\mathrm{d}\varepsilon}
    (\mathbf{V}+\varepsilon\mathbf{U})
    \right]
    -
    \frac{\mathrm{d}}{\mathrm{d}\varepsilon}
    h[W\ast\mathcal{L}_{\sqrt{\varepsilon}}[\mathcal{U}]]
    \nonumber
    \\
    &=
    \frac{1}{2}
    \Tr\left[
    \left(\mathbf{V}+\varepsilon\mathbf{U}\right)^{-1}\mathbf{U}
    \right]
    -
    \frac{\mathrm{d}}{\mathrm{d}\varepsilon}
    h[W\ast\mathcal{L}_{\sqrt{\varepsilon}}[\mathcal{U}]]
    \nonumber
\end{align}
where we have used Jacobi's formula for the derivative of the determinant of a matrix.
Finally, we need to evaluate that expression in the limit case $\varepsilon=0$, which yields:
\begin{align}
    \frac{\mathrm{d}}{\mathrm{d}\varepsilon}
    \mu[W\ast\mathcal{L}_{\sqrt{\varepsilon}}\left[\mathcal{U}\right]]
    \Big\vert_{\varepsilon=0}
    =
    \frac{1}{2}
    \Tr\left[
    \mathbf{V}^{-1}\mathbf{U}
    \right]
    -
    \frac{\mathrm{d}}{\mathrm{d}\varepsilon}
    h[W\ast\mathcal{L}_{\sqrt{\varepsilon}}[\mathcal{U}]] \Big\vert_{\varepsilon=0}.
    \label{eq:derivative_measure_epsilon}
\end{align}

\subsection{De Bruijn's identity for Wigner functions}

The second term in the RHS of Eq. \eqref{eq:derivative_measure_epsilon} corresponds to the increase of entropy of $W$ when it undergoes an infinitesimal Gaussian convolution.
If $W$ was a genuine probability distribution, that quantity would be related to the Fisher information through the so-called De Bruijn identity \cite{Stam1959-tq}.
In this section, we show that De Bruijn's identity  can be extended to Wigner functions, even when they take negative values.
We provide here a concise proof, which closely follows the the usual derivation of De Bruijn's identity \cite{Cover1999-rp,Johnson2004-dj}.

Let us first observe that solutions of the heat equation $\partial W/\partial \varepsilon=(1/2)\Delta W$ (where $\Delta=\sum_i(\partial/\partial r_i)^2$) have the form:
\begin{align}
    W_\varepsilon=W_0\ast\mathcal{L}_{\sqrt{\varepsilon}}\left[N\right],
\end{align}
where $N$ is a standard Gaussian disitribution of covariance matrix $\mathbf{I}$.
Then, we can compute the derivative of $\mathrm{Re}  h(W_\varepsilon)$ with respect to $\varepsilon$ as follows:
\begin{align}
    \frac{\mathrm{d}}{\mathrm{d}\varepsilon}
    \mathrm{Re}h[W_\varepsilon]
    &=
    -\int \frac{\mathrm{d}}{\mathrm{d}\varepsilon}\Big(W_\varepsilon(\mathbf{r})\ln\vert W_\varepsilon(\mathbf{r})\vert\Big)
     d^{2N}r
    \nonumber
    \\
    &=
    -\int
    \Big(
    \ln\vert W_\varepsilon(\mathbf{r})\vert
    +1
    \Big)
    \frac{\mathrm{d}}{\mathrm{d}\varepsilon}W_\varepsilon(\mathbf{r})
     d^{2N}r
    \nonumber
    \\
    &=
    -\frac{1}{2}\sum_{i}
    \int
    \Big(
    \ln\vert W_\varepsilon(\mathbf{r})\vert
    +1
    \Big)
    \frac{\partial^2}{\partial r_i^2}W_\varepsilon(\mathbf{r})
     d^{2N}r
    \nonumber
    \\
    &=
    -\frac{1}{2}\sum_{i}
    \left[
    \Big(
    \ln\vert W_\varepsilon(\mathbf{r})\vert
    +1
    \Big)
    \left(
    \frac{\partial}{\partial r_i}W_\varepsilon(\mathbf{r})
    \right)
    \right]_{r_i=-\infty}^{+\infty}
    +
    \frac{1}{2}
    \sum_{i}
    \int
    \left(
    \frac{\partial}{\partial r_i}
    \ln\vert W_\varepsilon(\mathbf{r})\vert
    \right)
    \left(
    \frac{\partial}{\partial r_i}W_\varepsilon(\mathbf{r})
    \right)
     d^{2N}r
    \nonumber
    \\
    &=
    \frac{1}{2}
    \sum_i
    \int
    \left(
    \frac{1}{W_\varepsilon(\mathbf{r})}
    \frac{\partial}{\partial r_i}W(\mathbf{r})
    \right)^2
    W_\varepsilon(\mathbf{r}) d^{2N}r
    \label{eq:standard_debruijn}
\end{align}
Here, we have used integration by parts.
Then, from a similar argument as presented in \cite{Cover1999-rp}, we concluded that the term in square brackets is zero.
The next step of the proof is then to generalize to Gaussian distribution which are non-standard (going from $N$ to any Gaussian $\mathcal{U}$).
To that purpose, we introduce the Fisher information matrix $\mathbf{J}$, which is defined from $W$ as follows:
\begin{align}
    J_{ij}\equiv \left(\mathbf{J}\right)_{ij}
    &=
    \int
    \left(
    \frac{1}{W(\mathbf{r})}
    \frac{\partial}{\partial r_i}
    W(\mathbf{r})
    \right)
    \left(
    \frac{1}{W(\mathbf{r})}
    \frac{\partial}{\partial r_j}
    W(\mathbf{r})
    \right)
    W(\mathbf{r}) d^{2N}r.
    \label{eq:FIwrtW}
\end{align}

Then, following a similar argument as the one presented in \cite{Johnson2004-dj}, we extend the proof to the case of non-standard Gaussian distributions, which gives us:
\begin{align}
    \frac{\mathrm{d}}{\mathrm{d}\varepsilon}
    \mathrm{Re}h\left[W\ast\mathcal{L}_{\sqrt{\varepsilon}}\left[\mathcal{U}\right]\right]
    \Big\vert_{\varepsilon=0}
    =
    \frac{1}{2}
    \Tr\left[
    \mathbf{U}\mathbf{J}\right].
    \label{eq:multivariate_debruijn}
\end{align}
where $\mathbf{J}$ is the Fisher information matrix of $W$ and $\mathbf{U}$ is the covariance matrix of the Gaussian $\mathcal{U}$ (note that if $\mathbf{U}=\mathbf{I}$ we find Eq. \eqref{eq:standard_debruijn}).
Eq. \eqref{eq:multivariate_debruijn} is precisely the multivariate De Bruijn identity \cite{Costa1984-az, Johnson2004-rt}, and we have shown here that it holds even for distributions that take negative values (under the condition that we extend usual entropy to the real part of the complex-valued entropy).
Using this relation, we can finally rewrite Eq. \eqref{eq:derivative_measure_epsilon} as:
\begin{align}
    \frac{\mathrm{d}}{\mathrm{d}\varepsilon}
    \mathrm{Re}\mu\left[
    W\ast\mathcal{L}_{\sqrt{\varepsilon}}\left[\mathcal{U}\right]    
    \right]\Big\vert_{\varepsilon=0}
    &=
    \frac{1}{2}
    \Tr\left[
    \mathbf{U}(\mathbf{V}^{-1}-\mathbf{J})
    \right].
    \label{eq:derivative_measure_with_debruijn}
\end{align}
If we want the latter expression to be negative for any Gaussian channel (and thus for any covariance matrix $\mathbf{U}$), we need the matrix $(\mathbf{V}^{-1}-\mathbf{J})$ to be negative semi-definite.

\subsection{A proof for non-negative Wigner functions with the Cram\'er-Rao bound}

In the particular case of Wigner-positive states (having non-negative Wigner functions), we can prove that Eq. \eqref{eq:measure_inequality_gaussian_channels} always holds.
First, we will use the fact that the Fisher information matrix is always positive semi-definite (for genuine probability distributions).
Then, we will use the multivariate Cram\'er-Rao bound \cite{Bobrovsky1987-vt}, which states the following:
\begin{align}
    \mathbf{V}\geq\mathbf{J}^{-1},
\end{align}
and should be understood as $\left(\mathbf{V}-\mathbf{J}^{-1}\right)$ being a positive semi-definite matrix.
Then, since $ \mathbf{V} \geq \mathbf{J}^{-1} > 0 $ we have that $ \mathbf{J} \geq \mathbf{V}^{-1} > 0 $ and thus $(\mathbf{V}^{-1}-\mathbf{J}) \leq 0$, negative semi-definite.    \cite[Corollary 7.7.4 (a)]{horn2012matrix}. From \eqref{eq:derivative_measure_with_debruijn} we find:
\begin{align}
    \frac{\mathrm{d}}{\mathrm{d}\varepsilon}
    \mathrm{Re}\mu\left[W\ast\mathcal{L}_{\sqrt{\varepsilon}}\left[\mathcal{U}\right]\right]
    \Big\vert_{\varepsilon=0}
    &=
    \frac{1}{2}
    \Tr\left[
    \mathbf{U}\left(\mathbf{V}^{-1}-\mathbf{J}\right)
    \right]
    \\
    &=
    \frac{1}{2}
    \Tr\left[
    \mathbf{U}^{\frac{1}{2}}\left(\mathbf{V}^{-1}-\mathbf{J}\right)\mathbf{U}^{\frac{1}{2}}
    \right]
    \\& \leq 0,
\end{align}
where we have used the fact that the matrix $\mathbf{U}^{\frac{1}{2}}\left(\mathbf{V}^{-1}-\mathbf{J}\right)\mathbf{U}^{\frac{1}{2}}$ is negative semi-definite, and that the trace of a negative semi-definite matrix is non-positive.

In conclusion, we have proven that if a state has a non-negative Wigner function, then its non-Gaussianity measure $\mathrm{Re}\mu[W]$ can only decrease when the state evolves in a Gaussian channel.
However, if the state has a Wigner function that takes negative values, we couldn't use the Cram\'er-Rao bound to conclude our proof.

\section{Numerical Methods}
\label{app:NumericalMethods}

The Wigner function of a Fock state $|n\rangle$ is,
\begin{align}
W_n(q,p)=\frac{e^{-(q^2+p^2)}}{\pi}(-1)^nL_n(2q^2+2p^2).
\end{align}
In polar coordinates $q=r \sin\theta,\ p=r \cos \theta$, i.e.,  $r^2=q^2+p^2$, we can write,
\begin{align}
W_n(r,\theta)\equiv W_n(r)=\frac{e^{-r^2}}{\pi}(-1)^nL_n(2r^2),
\end{align}
where $L_n(.)$ is the Laguerre polynomial of the $n-$th order.
From the definition of Fisher information $\mathbf{J}$ we get,
\begin{subequations}
\begin{align}
\text{Tr}(\mathbf{J})&=\int^{2\pi}_0\int^{+\infty}_0 W_n(r)(\Vec{\nabla} \ln|W_n(r)|)^2 r dr d\theta\\
&=2\pi\int^{+\infty}_0 W_n(r)\left(\left(\frac{\partial}{\partial r}\ln|W_n(r)|\right)^2+\left(\frac{1}{r}\frac{\partial}{\partial\theta} \ln|W_n(r)|\right)^2\right) r dr\\
&=2\pi\int^{+\infty}_0 \frac{r}{W_n(r)}\left(\frac{\partial W_n(r)}{\partial r}\right)^2 dr\\
&=\int^{+\infty}_0 \frac{8(-1)^n e^{-r^2}r^3(L_n(2r^2)+2L_{n-1}^1(2r^2))^2}{L_n(2r^2)} dr\\
&=\int^{+\infty}_0 \frac{2(-1)^n e^{-r^2}r^2(L_n(2r^2)+2L_{n-1}^1(2r^2))^2}{L_n(2r^2)} d(2r^2)
\end{align}
\end{subequations}
By letting $x=2r^2$ we can write,
\begin{align}
\label{eqapp:J}\text{Tr}(\mathbf{J})=\int^{+\infty}_0 \frac{(-1)^n e^{-\frac{x}{2}}x(L_n(x)+2L_{n-1}^1(x))^2}{L_n(x)} dx,
\end{align}
where $L_n^\alpha(.)$ are the associated Laguerre polynomials with $L_n^0(.)=L_n(.)$. The main challenge of Eq. \eqref{eqapp:J} is that the denominator can be equal to zero. Therefore, we calculate the principal value of said integral. To this end, we rewrite Eq. \eqref{eqapp:J} as,
\begin{subequations}
\begin{align}
\text{Tr}(\mathbf{J})&=\int^{+\infty}_0 \frac{(-1)^n e^{-\frac{x}{2}}x((L_n(x))^2+4L_n(x)L_{n-1}^1(x)+(2L_{n-1}^1(x))^2)}{L_n(x)} dx\\
\label{eq:NGFI_term_a} &=\int^{+\infty}_0 (-1)^n e^{-\frac{x}{2}}x(L_n(x)+4L_{n-1}^1(x)) dx\\
&\quad \label{eq:NGFI_term_b} +\int^{+\infty}_0 \frac{(-1)^n e^{-\frac{x}{2}}x\cdot 4(L_{n-1}^1(x))^2)}{L_n(x)} dx.
\end{align}
\end{subequations}
Now, we consider the two terms separately,
\begin{subequations}
\begin{align}
\eqref{eq:NGFI_term_a}&=\int^{+\infty}_0 (-1)^n e^{-\frac{x}{2}}x L_n(x) dx + \int^{+\infty}_0 (-1)^n e^{-\frac{x}{2}}x\cdot4L_{n-1}^1(x) dx\\
&=(-1)^n\int^{+\infty}_0  e^{-\frac{x}{2}}x L_n(x) dx + 4(-1)^n\int^{+\infty}_0  e^{-\frac{x}{2}}x L_{n-1}^1(x) dx.
\end{align}
\end{subequations}
We find that,
\begin{align}
\int^{+\infty}_0 e^{-st}t^\beta L^\alpha_n(t) dt&=\frac{\Gamma(\beta+1)\Gamma(\alpha+n+1)}{n!\Gamma(\alpha+1)}s^{-\beta-1}{_2F_1(-n, \beta+1, \alpha+1,\frac{1}{s})}\equiv f(n,\beta,\alpha,s)
\end{align}
where $_2F_1(.)$ and $\Gamma(.)$ are the confluent hypergeometric function and the gamma function respectively. Therefore,
\begin{align}
\label{eq:NGFI_term_a_sol} \eqref{eq:NGFI_term_a} =(-1)^n f(n,1,0,\frac{1}{2})+4(-1)^n f(n-1,1,1,\frac{1}{2})=-8n+4.
\end{align}
Next, we consider the term \eqref{eq:NGFI_term_b}. The Laguerre polynomials can be written as,
\begin{align}
\label{eq:NGFI_Ln} L_n(x)&=\sum^n_{k=0}
\begin{pmatrix}
n\\k
\end{pmatrix}
\frac{(-1)^k}{k!}x^k,\\
\label{eq:NGFI_Ln-1} L^1_{n-1}(x)&=\sum^{n-1}_{k=0}
\begin{pmatrix}
n\\k+1
\end{pmatrix}
\frac{(-1)^k}{k!}x^k
\end{align}
For the term $x(L^1_{n-1}(x))^2$ in \eqref{eq:NGFI_term_b} we write,
\begin{align}
\label{eq:NGFI_pq} x(L^1_{n-1}(x))^2=p(x)L_n(x)+q(x),
\end{align}
where,
\begin{align}
\label{eq:NGFI_p} p(x)&=\sum^{n-1}_{k=0}p_k x^k,\\
\label{eq:NGFI_q} q(x)&=\sum^{n-1}_{k=0}q_k x^k
\end{align}
To calculate $p_k$ and $q_k$, we can use the commands \texttt{PolynomialQuotient}, \texttt{PolynomialMod} and \texttt{CoefficientList} in Mathematica. Then, we can separate \eqref{eq:NGFI_term_b} into two parts,
\begin{subequations}
\begin{align}
\eqref{eq:NGFI_term_b}&=4(-1)^n\int^{+\infty}_0 \frac{e^{-\frac{x}{2}}(p(x)L_n(x)+q(x))}{L_n(x)} dx\\
\label{eq:NGFI_term_1} &=4(-1)^n\int^{+\infty}_0 p(x) e^{-\frac{x}{2}} dx\\
&\quad \label{eq:NGFI_term_2} +4(-1)^n\int^{+\infty}_0 \frac{q(x)}{L_n(x)} e^{-\frac{x}{2}} dx
\end{align}
\end{subequations}
Now, \eqref{eq:NGFI_term_1} is rendered manageable. By virtue of \eqref{eq:NGFI_p}, we get
\begin{subequations}
\begin{align}
\eqref{eq:NGFI_term_1}=&4(-1)^n\int^{+\infty}_0 \sum^{n-1}_{k=0}p_k x^k e^{-\frac{x}{2}} dx\\
=&4(-1)^n\sum^{n-1}_{k=0}p_k\int^{+\infty}_0  x^k e^{-\frac{x}{2}} dx\\
\stackrel{\text{let $x=2t$}}{=\joinrel=}&4(-1)^n\sum^{n-1}_{k=0}p_k\int^{+\infty}_0  (2t)^k e^{-t} d(2t)\\
=&4(-1)^n\sum^{n-1}_{k=0}p_k 2^{k+1}\int^{+\infty}_0  t^k e^{-t} dt\\
=&4(-1)^n\sum^{n-1}_{k=0}p_k 2^{k+1}\Gamma(k+1)\\
\label{eq:NGFI_term_1_sol} =&4(-1)^n\sum^{n-1}_{k=0}p_k 2^{k+1}k!
\end{align}
\end{subequations}
Now, we consider the term \eqref{eq:NGFI_term_2}. To compute said term, let us rewrite $\frac{1}{L_n(x)}$ using \eqref{eq:NGFI_Ln},
\begin{subequations}
\begin{align}
\frac{1}{L_n(x)}
=\frac{1}{\sum^n_{k=0}
\begin{pmatrix}
n\\k
\end{pmatrix}
\frac{(-1)^k}{k!}x^k}
&=\frac{1}{a_0 x^0+...+a_n x^n}\\
&=\frac{1}{a_n(x-x_1)...(x-x_n)}\\
&=\frac{1}{a_n}\prod^n_{i=1}\frac{1}{x-x_i}
\end{align}
\end{subequations}
where $x_i$ is the root of $L_n(x)=0$, and $a_0,\ldots,a_n$ are the coefficients of increasing powers of $x$ of the polynomial $L_n(x)$. Therefore,
\begin{subequations}
\begin{align}
a_n&=
\begin{pmatrix}
n\\n
\end{pmatrix}
\frac{(-1)^n}{n!}=\frac{(-1)^n}{n!},\\
\label{eq:NGFI_a} a& \equiv \frac{1}{a_n}=(-1)^n n!,\\
\frac{1}{L_n(x)}&=a\prod^n_{i=1}\frac{1}{x-x_i},
\end{align}
\end{subequations}
Next, we recast the product into a summation,
\begin{subequations}
\begin{align}
\prod^n_{i=1}\frac{1}{x-x_i}&=\sum^n_{i=1}\frac{b_i}{x-x_i},\\
\lim_{x\rightarrow x_k} (x-x_k)\prod^n_{i=1}\frac{1}{x-x_i}&=\lim_{x\rightarrow x_k} (x-x_k)\sum^n_{i=i}\frac{b_i}{x-x_i},\\
\lim_{x\rightarrow x_k}\prod^n_{\substack{i=1\\i\neq k}}\frac{1}{x-x_i}&=\lim_{x\rightarrow x_k} \frac{x-x_k}{x-x_k}b_k,\\
\label{eq:NGFI_b} b_k&=\prod^n_{\substack{i=1\\i\neq k}}\frac{1}{x_k-x_i}.
\end{align}
\end{subequations}
Therefore,
\begin{align}
\label{eq:NGFI_1/Ln} \frac{1}{L_n(x)}&=a\sum^n_{i=1}\frac{b_i}{x-x_i}.
\end{align}
Now, using Eq. \eqref{eq:NGFI_1/Ln}, term \eqref{eq:NGFI_term_2} can be written as,
\begin{subequations}
\begin{align}
\eqref{eq:NGFI_term_2}&=4(-1)^n\int^{+\infty}_0 a\sum^n_{i=1}\frac{b_i q(x)}{x-x_i}e^{-\frac{x}{2}} dx\\
&=4(-1)^n a\sum^n_{i=1}b_i\int^{+\infty}_0 \frac{ q(x)}{x-x_i}e^{-\frac{x}{2}} dx\\
\label{eq:NGFI_term_I} &= 4(-1)^n a\sum^n_{i=1}b_i\int^{+\infty}_0 \frac{ q(x_i)}{x-x_i}e^{-\frac{x}{2}} dx\\
&\quad \label{eq:NGFI_term_II} +4(-1)^n a\sum^n_{i=1}b_i\int^{+\infty}_0 \frac{ q(x)-q(x_i)}{x-x_i}e^{-\frac{x}{2}} dx,
\end{align}
\end{subequations}
where we made the poles to become apparent in \eqref{eq:NGFI_term_II}. We simplify \eqref{eq:NGFI_term_I} as,
\begin{subequations}
\begin{align}
\eqref{eq:NGFI_term_I}&=4(-1)^n a\sum^n_{i=1}b_i q(x_i)\int^{+\infty}_0 \frac{e^{-\frac{x}{2}}}{x-x_i} dx\\
\label{eq:NGFI_term_I_sol} &= 4(-1)^n a\sum^n_{i=1}b_i q(x_i)E_i(x_i)
\end{align}
\end{subequations}
Where $E_i(x_i)$ is,
\begin{align}
\label{eq:NGFI_Ei} E_i(x_i) = \int^{+\infty}_0 \frac{e^{-\frac{x}{2}}}{x-x_i} dx
\end{align}
and represented by \texttt{ExpIntegralEi} in Mathematica.

Using Eq. \eqref{eq:NGFI_q}, term \eqref{eq:NGFI_term_II} gives,
\begin{subequations}
\begin{align}
\eqref{eq:NGFI_term_II}&=4(-1)^n a\sum^n_{i=1}b_i\int^{+\infty}_0 \frac{\sum^{n-1}_{k=0}q_k (x^k-x_i^k)}{x-x_i}e^{-\frac{x}{2}} dx\\
&=4(-1)^n a\sum^n_{i=1}b_i\sum^{n-1}_{k=0}q_k\int^{+\infty}_0 \frac{ x^k-x_i^k}{x-x_i}e^{-\frac{x}{2}} dx\\
\label{eq:NGFI_term_II_1} &=4(-1)^n a\sum^n_{i=1}b_i\sum^{n-1}_{k=1}q_k\int^{+\infty}_0 \frac{ x^k-x_i^k}{x-x_i}e^{-\frac{x}{2}} dx\\
&\quad \label{eq:NGFI_term_II_2} +4(-1)^n a\sum^n_{i=1}b_i q_0\int^{+\infty}_0 \frac{0}{x-x_i}e^{-\frac{x}{2}} dx.
\end{align}
\end{subequations}
Term \eqref{eq:NGFI_term_II_2} is equal to $0$. Therefore, we focus on term \eqref{eq:NGFI_term_II_1}. To this end, we first observe that, 
\begin{align}
a^k-b^k=(a-b)(a^{k-1}+a^{k-2}b+...+ab^{k-2}+b^{k-1}).
\end{align}
Therefore,
\begin{align}
\label{eq:NGFI_a-b} x^k-x_i^k=(x-x_i)\sum^{k-1}_{m=0}x^{k-1-m}x_i^m.
\end{align}
Using Eq. \eqref{eq:NGFI_a-b}, the term \eqref{eq:NGFI_term_II_1} gives,
\begin{subequations}
\begin{align}
\eqref{eq:NGFI_term_II_1}=&4(-1)^n a\sum^n_{i=1}b_i\sum^{n-1}_{k=1}q_k\int^{+\infty}_0 \frac{ (x-x_i)\sum^{k-1}_{m=0}x^{k-1-m}x_i^m}{x-x_i}e^{-\frac{x}{2}} dx\\
=&4(-1)^n a\sum^n_{i=1}b_i\sum^{n-1}_{k=1}q_k\sum^{k-1}_{m=0}x_i^m \int^{+\infty}_0 x^{k-1-m}e^{-\frac{x}{2}} dx\\
\stackrel{\text{let $x=2t$}}{=\joinrel=}&4(-1)^n a\sum^n_{i=1}b_i\sum^{n-1}_{k=1}q_k\sum^{k-1}_{m=0}x_i^m \int^{+\infty}_0 (2t)^{k-m-1}e^{-t} d(2t)\\
=&4(-1)^n a\sum^n_{i=1}b_i\sum^{n-1}_{k=1}q_k\sum^{k-1}_{m=0}x_i^m 2^{k-m} \int^{+\infty}_0 t^{k-m-1}e^{-t} dt\\
=&4(-1)^n a\sum^n_{i=1}b_i\sum^{n-1}_{k=1}q_k\sum^{k-1}_{m=0}x_i^m 2^{k-m} \Gamma(k-m)\\
\label{eq:NGFI_term_II_1_sol} =&4(-1)^n a\sum^n_{i=1}b_i\sum^{n-1}_{k=1}q_k\sum^{k-1}_{m=0}x_i^m 2^{k-m} (k-m-1)!
\end{align}
\end{subequations}

Finally, we write $\text{Tr}(\mathbf{J})=$\eqref{eq:NGFI_term_a_sol}+\eqref{eq:NGFI_term_1_sol}+\eqref{eq:NGFI_term_I_sol}+\eqref{eq:NGFI_term_II_1_sol} based on Eqs. \eqref{eq:NGFI_pq}, \eqref{eq:NGFI_p}, \eqref{eq:NGFI_q}, \eqref{eq:NGFI_a}, \eqref{eq:NGFI_b},  \eqref{eq:NGFI_Ei}, \eqref{eq:NGFI_Ln}, and \eqref{eq:NGFI_Ln-1}. That is,
\begin{subequations}
\begin{align}
\text{Tr}(\mathbf{J})&=-8n+4\\
&+4(-1)^n\sum^{n-1}_{k=0}p_k 2^{k+1}k!\\
&+4(-1)^n a\sum^n_{i=1}b_i q(x_i)E_i(x_i)\\
&+4(-1)^n a\sum^n_{i=1}b_i\sum^{n-1}_{k=1}q_k\sum^{k-1}_{m=0}x_i^m 2^{k-m} (k-m-1)!
\end{align}
\end{subequations}
where,
\begin{subequations}
\begin{align}
p(x)L_n(x)+q(x)&=x(L^1_{n-1}(x))^2\\
p(x)&=\sum^{n-1}_{k=0}p_k x^k\\
q(x)&=\sum^{n-1}_{k=0}q_k x^k\\
a&=(-1)^n n!\\
b_k&=\prod^n_{\substack{i=1\\i\neq k}}\frac{1}{x_k-x_i}\\
E_i(x_i) &= \int^{+\infty}_0 \frac{e^{-\frac{x}{2}}}{x-x_i} dx\\
L_n(x)&=\sum^n_{k=0}
\begin{pmatrix}
n\\k
\end{pmatrix}
\frac{(-1)^k}{k!}x^k\\
L^1_{n-1}(x)&=\sum^{n-1}_{k=0}
\begin{pmatrix}
n\\k+1
\end{pmatrix}
\frac{(-1)^k}{k!}x^k\\
\text{$x_i$ } &\text{is the $i-$th root of $L_n(x)=0$}
\end{align}
\end{subequations}
\section{Critical points of \texorpdfstring{$\mu[W]$}{mu[W]}}
\label{app:minReal}

Let us revise some basic elements of functional derivatives (for a concise overview see also \cite[Appendix C therein]{BookForAppC}). The functional we will consider is of the following integral form,
\begin{eqnarray}
\label{eq:functionalGeneral} F[W(\mathbf{r})]=\int d^{2N} r g(W(\mathbf{r}))+\sum_{i=1}^{ 2 K} \lambda_i G^{(i)}[W(\mathbf{r})],
\end{eqnarray}
where $ g(W(\mathbf{r}))$ is not a functional, but rather a form that involves $W(\mathbf{r})$, e.g., $g(W(\mathbf{r}))=-W(\mathbf{r}) \ln |W(\mathbf{r})|$ and $G^{(i)}[W(\mathbf{r})]$ are functionals representing $K$ constraints, i.e., to find the critical points of $F[W(\mathbf{r})]$ over normalized Wigner functions the constraint $G^{(1)}[W(\mathbf{r})]=\int d^{2N} r  W(\mathbf{r})=1$ should be included.

Let us assume that the functionals $G_i[W(\mathbf{r})]$ are in an integral form and therefore Eq. \eqref{eq:functionalGeneral} can be written as,
\begin{eqnarray}
\label{eq:functionalGeneral2} F[W(\mathbf{r})]=\int d^{2N} r  f(W(\mathbf{r})).
\end{eqnarray}
The functional derivative is defined through the (usual) derivative,
\begin{eqnarray}
\label{eq:FuncDervt1}\delta F[W(\mathbf{r})] =\int d^{2N} r  \frac{d f(W(\mathbf{r})+\varepsilon h(\mathbf{r}))}{d\varepsilon}\Bigg|_{\varepsilon\rightarrow 0},
\end{eqnarray}
where $h(\mathbf{r})$ is \emph{any} function. After performing the derivative with respect to $\varepsilon$ and taking the limit $\varepsilon\rightarrow 0$, if  Eq. \eqref{eq:FuncDervt1} assumes the form,
\begin{eqnarray}
\label{eq:FuncDervt2}\delta F[W(\mathbf{r})] =\int d^{2N} r  \frac{d f(W(\mathbf{r})+\varepsilon h(\mathbf{r}))}{d\varepsilon}\Bigg|_{\varepsilon\rightarrow 0}= \int d^{2N} r  h(\mathbf{r}) \frac{\delta F[W(\mathbf{r})]}{\delta W(\mathbf{r})},
\end{eqnarray}
i.e., if the function $h(\mathbf{r})$ factors out, then the term multiplying $h(\mathbf{r})$, denoted as $\frac{\delta F[W(\mathbf{r})]}{\delta W(\mathbf{r})}$, is called the functional derivative of the functional $F[W(\mathbf{r})]$ with respect to the function $W(\mathbf{r})$. Otherwise, the functional derivative cannot be defined.

Setting Eq. \eqref{eq:FuncDervt2} equal to zero, and solving for $W(\mathbf{r})$ will give the Wigner functions that serve as critical points of the functional $F[W(\mathbf{r})]$. We have,
\begin{eqnarray}
\nonumber\delta F[W(\mathbf{r})] &=&\int d^{2N} r   h(\mathbf{r}) \frac{\delta F[W(\mathbf{r})]}{\delta W(\mathbf{r})}=0\Rightarrow\\
\label{eq:FuncDervt3} \frac{\delta F[W(\mathbf{r})]}{\delta W(\mathbf{r})}&=&0,
\end{eqnarray}
where the last step is valid since $h(\mathbf{r})$ is any function.
The critical points will depend on $\lambda_i$. To identify the $\lambda_i$, one takes the solutions of Eq. \eqref{eq:FuncDervt3}, and plugs them into the constraints which are typically given by taking the (usual) derivatives of $F[W(\mathbf{r})]$ with respect to each $\lambda_i$.

Similarly, we can define the second functional derivative by starting with,
\begin{eqnarray}
\label{eq:secFuncDervt1}\delta^2 F[W(\mathbf{r})] =\int d^{2N} r  \frac{d^2 f(W(\mathbf{r})+\varepsilon h(\mathbf{r}))}{d\varepsilon^2}\Bigg|_{\varepsilon\rightarrow 0}.
\end{eqnarray}
If Eq. \eqref{eq:secFuncDervt1} can be written as,
\begin{eqnarray}
\label{eq:secFuncDervt2}\delta^2 F[W(\mathbf{r})] =\int d^{2N} r  \frac{d^2 f(W(\mathbf{r})+\varepsilon h(\mathbf{r}))}{d\varepsilon^2}\Bigg|_{\varepsilon\rightarrow 0}= \int d^{2N} r  h^2(\mathbf{r}) \frac{\delta^2 F[W(\mathbf{r})]}{\delta W(\mathbf{r})^2},
\end{eqnarray}
then the second functional derivative exist and is equal to $\frac{\delta^2 F[W(\mathbf{r})]}{\delta W(\mathbf{r})^2}$.

Applying this recipe to Eq. \eqref{eq:FuncRe1} (or Eq. \eqref{eq:FuncRe2}) we find,
\begin{eqnarray}
\label{eq:FuncDrvtRe1}	\frac{\delta R[W(\mathbf{r})]}{\delta W(\mathbf{r})}  = 1+\lambda_1+\ln |W(\textbf{r})|+\sum_{i=1}^{ 2 N } \mu_i r_{i}+\sum_{i,j=1}^{ 2 N } \Lambda_{ij}r_{i} r_{j}
\end{eqnarray}

Setting Eq. \eqref{eq:FuncDrvtRe1} equal to zero and using vector-vector and vector-matrix multiplication (for compactness), we get,
\begin{eqnarray}
\label{eq:RePartWigner1}	|W^{\text{CP}}(\textbf{r})| = \frac{1}{|k|} \exp\left[-\left(\mathbf{r}^T \boldsymbol{\Lambda} \mathbf{r}+\boldsymbol{\mu}^T \mathbf{r}\right)\right],
\end{eqnarray}
where $\boldsymbol{\Lambda}$ is a matrix with elements $\Lambda_{ij}$, $\boldsymbol{\mu}$ is a vector with elements $\mu_i$, and $|k|=e^{\lambda_1+1}$.
Equation \eqref{eq:RePartWigner1} gives the Wigner function that are the critical points the functional of Eqs. \eqref{eq:FuncRe1} and \eqref{eq:FuncRe2}. The set of solutions satisfying Eq. \eqref{eq:RePartWigner1}  includes non-physical Wigner functions. To see this, we observe that the right hand side of \eqref{eq:RePartWigner1} is always positive as it is an exponential function of real numbers. To remove the absolute value we have two options: we either remove it by equating $W(\textbf{r})$ to the right hand side, or we equate $W(\textbf{r})$ to a \emph{partially flipped} version of the exponential, i.e., an $\frac{1}{|k|} \exp\left[-\left(\mathbf{r}^T \boldsymbol{\Lambda} \mathbf{r}+\boldsymbol{\mu}^T \mathbf{r}\right)\right]$ for some phase space regions (or even individual points) and $-\frac{1}{|k|} \exp\left[-\left(\mathbf{r}^T \boldsymbol{\Lambda} \mathbf{r}+\boldsymbol{\mu}^T \mathbf{r}\right)\right]$ for the complementary regions (or points).  We demand that any physical Wigner function must be continuous, therefore the only extremal Wigner function which is also physical, is the Gaussian solution,
\begin{eqnarray}
\label{eq:RePartWigner2}	W^{\text{CP}}(\textbf{r}) = \frac{1}{|k|} \exp\left[-\left(\mathbf{r}^T \boldsymbol{\Lambda} \mathbf{r}+\boldsymbol{\mu}^T \mathbf{r}\right)\right].
\end{eqnarray}
Using the constraints of Eqs. \eqref{eq:ReCon1}, \eqref{eq:ReCon2}, and \eqref{eq:ReCon3}, we get,
\begin{eqnarray}
\boldsymbol{\Lambda}&=&\frac{1}{2} \mathbf{V}^{-1},\\
\boldsymbol{\mu}&=&-\mathbf{V}^{-1} \mathbf{d},\\
|k|&=& (2\pi)^N \sqrt{\det \mathbf{V}} \exp\left[\frac{1}{2}\mathbf{d}^T \mathbf{V}^{-1}\mathbf{d}\right].
\end{eqnarray}
Therefore the only physical critical point is,
\begin{eqnarray}
\label{eq:RePartWigner3} W^{\text{CP}}(\textbf{r}) = \frac{1}{(2\pi)^N \sqrt{\det \mathbf{V}}} \exp\left[-\frac{1}{2}(\mathbf{r}-\mathbf{d})^T \mathbf{V}^{-1} (\mathbf{r}-\mathbf{d})\right]\equiv W_G(\mathbf{r}),
\end{eqnarray}
which is the Gaussian associate of $W(\mathbf{r})$.
For said critical point, Eq. \eqref{eq:Remu} gives $\text{Re}\mu[W_G]]=0$.

The second functional derivative of Eq. \eqref{eq:FuncRe1} or \eqref{eq:FuncRe2} is evaluated using Eq. \eqref{eq:secFuncDervt2} to be,
\begin{eqnarray}
    \frac{\delta^2 F[W(\mathbf{r})]}{\delta W(\mathbf{r})^2} =\frac{1}{W(\mathbf{r})},
\end{eqnarray}
which when evaluated at the critical point $W(\mathbf{r})=W_G(\mathbf{r})>0$, gives 
\begin{eqnarray}
\frac{\delta^2 F[W(\mathbf{r})]}{\delta W(\mathbf{r})^2}=\frac{1}{W_G(\mathbf{r})}>0,
\end{eqnarray}
therefore the Gaussian associate of $W(\mathbf{r})$ corresponds to a minimum. The partially flipped Gaussian functions do not correspond to a definite sign of the second functional derivative, therefore they are neither minima nor maxima.

\section{Wigner Function of Gottesman-Kitaev-Preskill (GKP) States}
\label{app:GKP_wigner}

The finitely squeezed GKP qubit states~\cite{Terhal2016,Seshadreesan2022} are defined in terms of a summation of displaced squeezed vacuum states as follows,
\begin{align}
	\ket{\bar{L}}=\sum_{t=-\infty}^{\infty} \exp(-\pi{\Delta}^2 (2t+\bar{L})^2/2) \hat{D}((2t+\bar{L})\sqrt{\pi})\hat{S}(\xi)\ket{\mathrm{0}}
\end{align}
where $ \bar{L}=\{0,1\} $ for the two logical qubit states. Let us use $ \rho_{\bar{L}}=\outprod{\bar{L}} $ to represent the density operators for the qubit states. We first calculate the symmetrically ordered characteristic function for $ \rho_{\bar{L}} $ defined as 
\begin{align}
    \chi_W(\eta,\eta^*) =\Tr\left(\rho_{\bar{L}}  e^{-\eta^*\hat{a}+\eta\hat{a}^\dagger } \right)=\left\langle \hat{D}(\eta)\right\rangle_{\rho_{\bar{L}}}.
\end{align} 
Before we proceed, let us make the following abbreviations
\begin{align}
	\outprod{\bar{L}}&=\sum_{t,t'=-\infty}^{\infty} e^{-\frac{\pi{\Delta}^2}{2} [(2t+\bar{L})^2+ (2t'+\bar{L})^2]} \hat{D}((2t+\bar{L})\sqrt{{\pi}/{2}})\hat{S}(\xi)\outprod{\mathrm{0}} \hat{S}^\dagger(\xi)\hat{D}^\dagger((2t'+\bar{L})\sqrt{{\pi}/{2}})\\
	&\equiv \sum_{t,t'=-\infty}^{\infty} f(t)f(t') \hat{D}(\alpha)\hat{S}(\xi)\outprod{\mathrm{0}}\hat{S}^\dagger(\xi)\hat{D}^\dagger(\beta)
\end{align}
with $f(t)=\exp(-\pi{\Delta}^2 (2t+\bar{L})^2/2), \alpha=(2t+\bar{L})\sqrt{\pi/2} \text{ and } \beta=(2t'+\bar{L})\sqrt{\pi/2} $. Simplifying $\left\langle \hat{D}(\eta)\right\rangle_{\rho_{\bar{L}}}$ yields, 
\begin{subequations}
	\begin{align}
		\left\langle \hat{D}(\eta)\right\rangle_{\rho_{\bar{L}}} =\Tr\left(\sum_{t,t'=-\infty}^\infty  f(t)f(t') \hat{D}(\alpha)\hat{S}(\xi)\outprod{\mathrm{0}}\hat{S}^\dagger(\xi)\hat{D}^\dagger(\beta) \hat{D}(\eta)\right)\\
		=\sum_{t,t'=-\infty}^\infty  f(t)f(t')\;\Tr\left( \hat{D}(\alpha)\hat{S}(\xi)\outprod{\mathrm{0}}\hat{S}^\dagger(\xi)\hat{D}^\dagger(\beta) \hat{D}(\eta)\right)\\
		=\sum_{t,t'=-\infty}^\infty  f(t)f(t')\;\Tr\left( \outprod{\mathrm{0}}\hat{S}^\dagger(\xi)\hat{D}^\dagger(\beta) \hat{D}(\eta)\hat{D}(\alpha)\hat{S}(\xi)\right)\\
		=\sum_{t,t'=-\infty}^\infty  f(t)f(t')\;\Tr\left( \outprod{\mathrm{0}}\hat{S}^\dagger(\xi)\hat{D}(\tau)\hat{S}(\xi)\right) g(\eta,\alpha,\beta)
	\end{align}
\end{subequations}
where $\tau \equiv \eta+\alpha-\beta$ and  since $ \alpha,\beta\in\mathbb{R} $, $g(\eta,\alpha,\beta)=\exp\left({(\eta-\eta^*)(\alpha+\beta)}/{2}\right)$. We may further simplify as,
\begin{subequations}
	\begin{align}
	&\sum_{t,t'=-\infty}^\infty  f(t)f(t')\;\Tr\left( \outprod{\mathrm{0}}\hat{S}^\dagger(\xi)\hat{D}(\tau)\hat{S}(\xi)\right)  g(\eta,\alpha,\beta)\\
	&=\sum_{t,t'=-\infty}^\infty  f(t)f(t')\;\Tr\left( \outprod{\mathrm{0}}\hat{S}^\dagger(\xi)\hat{S}(\xi)\hat{D}(\tau')\right)  g(\eta,\alpha,\beta)\\
	&=\sum_{t,t'=-\infty}^\infty  f(t)f(t')\;\Tr\left( \outprod{\mathrm{0}}\hat{D}(\tau' )\right)  g(\eta,\alpha,\beta)\\
	&=\sum_{t,t'=-\infty}^\infty  f(t)f(t')\exp\left({-(\eta-\eta^*)(\alpha+\beta)}/{2}\right) \exp\left(-{|\tau'|^2}/{2}\right) 
	\end{align}
\end{subequations}
where we made the substitution, $ \tau'=\tau \cosh|\xi| +\tau^*e^{i\theta} \sinh |\xi| $, as a result of reordering $\hat{S}(\xi)$ with $\hat{D}(\tau)$. Let us assume that $ \xi $ is real i.e.\ $ |\xi|=\xi; \theta=0 $ which gives 
\begin{align}
	|\tau'|^2&=|\tau|^2\left(\cosh^2\xi+\sinh^2\xi\right)+\sinh\xi \cosh\xi\,(\tau^2+\tau^{*2})
\end{align}
We see the following simplifications
\begin{subequations}
	\begin{align}
		&\tau^2= \eta^2+(\alpha-\beta)^2+2\eta(\alpha-\beta),\\
			&|\tau|^2=|\eta|^2+(\alpha-\beta)^2+(\eta+\eta^*)(\alpha-\beta).
	\end{align}
\end{subequations}
This gives us the final relation,
\begin{align}
	\chi_W(\eta)=\sum_{t,t'=-\infty}^\infty  f(t)f(t') \exp\biggl(&\frac{(\eta-\eta^*)(\alpha+\beta)}{2} 
	-\frac{|\eta|^2(\cosh^2\xi+\sinh^2\xi)}{2}-\frac{(\eta^2+\eta^{*2})(\cosh\xi\sinh\xi)}{2} \nonumber\\
&\quad	-\frac{(\eta+\eta^*)(\alpha-\beta)(\cosh \xi+\sinh \xi)^2}{2}-\frac{(\alpha-\beta)^2 (\cosh \xi+\sinh \xi)^2 }{2}\biggr).
	\label{eqn:char_simple}
\end{align}
Subsequently one may use the Fourier transform relation between $ \chi_{W}(\eta) $ and $ W(\gamma) $ as follows
\begin{align}
	W(\gamma)=\frac{1}{\pi^2}\int d^2\eta \; \chi_W(\eta) e^{-\eta\gamma^*+\eta^* \gamma},
\end{align}
This Fourier transform kernel simplifies to $ -i \sqrt{2}\eta_2 q + i \sqrt{2} \eta_1 p$, with the transform variable pairs $ \sqrt{2}q\leftrightarrow\eta_2 $ and $ -\sqrt{2}p\leftrightarrow\eta_1 $. This gives the Wigner component for each component of the sum,
\begin{align}
W_{\ket{\bar{L}}\bra{\bar{L}}}(q,p)|_{\alpha,\beta}=\frac{1}{\pi}	\exp\left(-p^2 e^{-2 \xi }+i \sqrt{2}p (\alpha -\beta )-e^{2 \xi } \biggl(\frac{1}{\sqrt{2}} (\alpha +\beta )+q\biggr)^2\right).
\end{align}
Hence the complete Wigner function may now be expressed as,
{\small
\begin{align}
	W_{\ket{\bar{L}}\bra{\bar{L}}}(q,p)=&\sum_{t,s=-\infty}^\infty \frac{1}{\pi} e^{-\frac{\pi {e}^{-2\xi}}{2} [(2s+\bar{L})^2+ (2t+\bar{L})^2]} 	\exp\left(- p^2 e^{-2 \xi }+i 2\sqrt{2}p (t-s )\sqrt{\frac{\pi}{2}}- e^{2 \xi } \biggl(\frac{1}{\sqrt{2}}(s+t+2\bar{L})\sqrt{\frac{\pi}{2}}+q\biggr)^2\right)\\
	=&\sum_{t,s=-\infty}^\infty \frac{1}{\pi} e^{-\frac{\pi{\Delta}^2}{2} [(2t+\bar{L})^2+ (2t'+\bar{L})^2]}	\exp\left(-p^2 \Delta^2+i 2\sqrt{2} p (t-s )\sqrt{\frac{\pi}{2}}- \Delta^{-2} \biggl(\frac{1}{\sqrt{2}}(s+t+2\bar{L})\sqrt{\frac{\pi}{2}}+q\biggr)^2\right).
	\label{eq:wigner_GKP}
\end{align}
}
where $ \xi=-\ln {\Delta} $. Note that the Wigner function is not properly normalized and must be done so ad-hoc. An alternative, but equivalent, definition of the GKP Wigner function can be found in Ref.~ \cite{bourassa2021fast}.\\
\\

\twocolumngrid

\bibliographystyle{abbrv}
\bibliography{bibliography_WRE}

\begin{thebibliography}{10}

\bibitem{Albarelli2018}
F.~Albarelli, M.~G. Genoni, M.~G.~A. Paris, and A.~Ferraro.
\newblock Resource theory of quantum non-{G}aussianity and {W}igner negativity.
\newblock {\em Phys. Rev. A}, 98:052350, Nov 2018.

\bibitem{Arzani2019}
F.~Arzani, A.~Ferraro, and V.~Parigi.
\newblock High-dimensional quantum encoding via photon-subtracted squeezed
  states.
\newblock {\em Phys. Rev. A}, 99:022342, Feb 2019.

\bibitem{Treps2016}
V.~Averchenko, C.~Jacquard, V.~Thiel, C.~Fabre, and N.~Treps.
\newblock Multimode theory of single-photon subtraction.
\newblock {\em New Journal of Physics}, 18(8):083042, aug 2016.

\bibitem{Lo2015}
K.~Azuma, K.~Tamaki, and H.-K. Lo.
\newblock All-photonic quantum repeaters.
\newblock {\em Nature Communications}, 6:6787 EP --, Apr 2015.
\newblock Article.

\bibitem{Barnett2018}
S.~M. Barnett, G.~Ferenczi, C.~R. Gilson, and F.~C. Speirits.
\newblock Statistics of photon-subtracted and photon-added states.
\newblock {\em Phys. Rev. A}, 98:013809, Jul 2018.

\bibitem{Bobrovsky1987-vt}
B.~Z. Bobrovsky, E.~Mayer-Wolf, and M.~Zakai.
\newblock Some classes of global {Crame\'r-Rao} bounds.
\newblock {\em Ann. Stat.}, 15(4):1421--1438, Dec. 1987.

\bibitem{bourassa2021fast}
J.~E. Bourassa, N.~Quesada, I.~Tzitrin, A.~Sz{\'a}va, T.~Isacsson, J.~Izaac,
  K.~K. Sabapathy, G.~Dauphinais, and I.~Dhand.
\newblock Fast simulation of bosonic qubits via gaussian functions in phase
  space.
\newblock {\em PRX Quantum}, 2(4):040315, 2021.

\bibitem{Caruso2008-dt}
F.~Caruso, J.~Eisert, V.~Giovannetti, and A.~S. Holevo.
\newblock Multi-mode bosonic gaussian channels.
\newblock {\em New J. Phys.}, 10(8):083030, Aug. 2008.

\bibitem{Cerf2023-gn}
N.~Cerf, A.~Hertz, and Z.~Van~Herstraeten.
\newblock Complex-valued {W}igner entropy of a quantum state.
\newblock {\em Quantum Stud. Math. Found.}, Oct 2023.

\bibitem{Ferrini2017}
U.~Chabaud, T.~Douce, D.~Markham, P.~van Loock, E.~Kashefi, and G.~Ferrini.
\newblock Continuous-variable sampling from photon-added or photon-subtracted
  squeezed states.
\newblock {\em Phys. Rev. A}, 96:062307, Dec 2017.

\bibitem{Chabaud2020Stellar}
U.~Chabaud, D.~Markham, and F.~Grosshans.
\newblock Stellar representation of non-gaussian quantum states.
\newblock {\em Phys. Rev. Lett.}, 124:063605, Feb 2020.

\bibitem{Cochrane1999}
P.~T. Cochrane, G.~J. Milburn, and W.~J. Munro.
\newblock Macroscopically distinct quantum-superposition states as a bosonic
  code for amplitude damping.
\newblock {\em Phys. Rev. A}, 59(4):2631--2634, Apr. 1999.

\bibitem{Costa1984-az}
M.~Costa and T.~Cover.
\newblock On the similarity of the entropy power inequality and the
  {Brunn-Minkowski} inequality (corresp.).
\newblock {\em IEEE Trans. Inf. Theory}, 30(6):837--839, Nov. 1984.

\bibitem{Cover1999-rp}
T.~M. Cover and J.~A. Thomas.
\newblock {\em Elements of information theory}.
\newblock John Wiley \& Sons, 1999.

\bibitem{Cramer1999-bs}
H.~Cram{\'e}r.
\newblock {\em Mathematical Methods of Statistics}.
\newblock Princeton University Press, Apr. 1999.

\bibitem{Ferraro2005}
A.~Ferraro, S.~Olivares, and M.~G.~A. Paris.
\newblock {\em Gaussian States in Quantum Information}.
\newblock Napoli Series on physics and Astrophysics. Bibliopolis, 2005.

\bibitem{PhysRevLett.106.200401}
R.~Filip and L.~Mi\ifmmode~\check{s}\else \v{s}\fi{}ta.
\newblock Detecting quantum states with a positive wigner function beyond
  mixtures of gaussian states.
\newblock {\em Phys. Rev. Lett.}, 106:200401, May 2011.

\bibitem{Cerf2005}
J.~Fiur\'a\ifmmode~\check{s}\else \v{s}\fi{}ek, R.~Garc\'{\i}a-Patr\'on, and
  N.~J. Cerf.
\newblock Conditional generation of arbitrary single-mode quantum states of
  light by repeated photon subtractions.
\newblock {\em Phys. Rev. A}, 72:033822, Sep 2005.

\bibitem{BookForAppC}
G.~Fredrickson.
\newblock {\em {The Equilibrium Theory of Inhomogeneous Polymers}}.
\newblock Oxford University Press, 12 2005.

\bibitem{Gagatsos2019}
C.~N. Gagatsos and S.~Guha.
\newblock Efficient representation of {G}aussian states for multimode
  non-{G}aussian quantum state engineering via subtraction of arbitrary number
  of photons.
\newblock {\em Phys. Rev. A}, 99:053816, May 2019.

\bibitem{Gagatsos2021}
C.~N. Gagatsos and S.~Guha.
\newblock Impossibility to produce arbitrary non-{G}aussian states using
  zero-mean {G}aussian states and partial photon number resolving detection.
\newblock {\em Phys. Rev. Research}, 3:043182, Dec 2021.

\bibitem{PhysRevA.87.062104}
M.~G. Genoni, M.~L. Palma, T.~Tufarelli, S.~Olivares, M.~S. Kim, and M.~G.~A.
  Paris.
\newblock Detecting quantum non-gaussianity via the wigner function.
\newblock {\em Phys. Rev. A}, 87:062104, Jun 2013.

\bibitem{Genoni2010}
M.~G. Genoni and M.~G.~A. Paris.
\newblock Quantifying non-{G}aussianity for quantum information.
\newblock {\em Phys. Rev. A}, 82:052341, Nov 2010.

\bibitem{Genoni2007}
M.~G. Genoni, M.~G.~A. Paris, and K.~Banaszek.
\newblock Measure of the non-{G}aussian character of a quantum state.
\newblock {\em Phys. Rev. A}, 76:042327, Oct 2007.

\bibitem{Genoni2008}
M.~G. Genoni, M.~G.~A. Paris, and K.~Banaszek.
\newblock Quantifying the non-{G}aussian character of a quantum state by
  quantum relative entropy.
\newblock {\em Phys. Rev. A}, 78:060303, Dec 2008.

\bibitem{Gottesman2001}
D.~Gottesman, A.~Kitaev, and J.~Preskill.
\newblock Encoding a qubit in an oscillator.
\newblock {\em Phys. Rev. A}, 64(1):012310, June 2001.

\bibitem{He2020}
M.~He, R.~Malaney, and J.~Green.
\newblock Global entanglement distribution with multi-mode non-{G}aussian
  operations.
\newblock {\em IEEE Journal on Selected Areas in Communications},
  38(3):528--539, 2020.

\bibitem{hertzthesis}
A.~Hertz.
\newblock {\em Exploring continuous-variable entropic uncertainty relations and
  separability criteria in quantum phase space}.
\newblock PhD thesis, ULB, 2 2018.

\bibitem{horn2012matrix}
R.~A. Horn and C.~R. Johnson.
\newblock {\em Matrix analysis}.
\newblock Cambridge university press, 2012.

\bibitem{ivan2011operator}
J.~S. Ivan, K.~K. Sabapathy, and R.~Simon.
\newblock Operator-sum representation for bosonic {G}aussian channels.
\newblock {\em Physical Review A}, 84(4):042311, 2011.

\bibitem{Johnson2004-rt}
O.~Johnson.
\newblock A conditional entropy power inequality for dependent variables.
\newblock {\em IEEE Trans. Inf. Theory}, 50(8):1581--1583, Aug. 2004.

\bibitem{Johnson2004-dj}
O.~Johnson.
\newblock {\em Information Theory and the Central Limit Theorem}.
\newblock World Scientific, 2004.

\bibitem{Knill2001-yg}
E.~Knill, R.~Laflamme, and G.~J. Milburn.
\newblock A scheme for efficient quantum computation with linear optics.
\newblock {\em Nature}, 409(6816):46--52, Jan. 2001.

\bibitem{PhysRevA.98.022335}
L.~Lami, B.~Regula, X.~Wang, R.~Nichols, A.~Winter, and G.~Adesso.
\newblock Gaussian quantum resource theories.
\newblock {\em Phys. Rev. A}, 98:022335, Aug 2018.

\bibitem{Katerina2009}
A.~Mandilara, E.~Karpov, and N.~J. Cerf.
\newblock Extending {H}udson's theorem to mixed quantum states.
\newblock {\em Phys. Rev. A}, 79:062302, Jun 2009.

\bibitem{Marek2008}
P.~Marek, H.~Jeong, and M.~S. Kim.
\newblock Generating ``squeezed'' superpositions of coherent states using
  photon addition and subtraction.
\newblock {\em Phys. Rev. A}, 78:063811, Dec 2008.

\bibitem{Mari2012}
A.~Mari and J.~Eisert.
\newblock Positive {W}igner functions render classical simulation of quantum
  computation efficient.
\newblock {\em Phys. Rev. Lett.}, 109:230503, Dec 2012.

\bibitem{marian2013relative}
P.~Marian and T.~A. Marian.
\newblock Relative entropy is an exact measure of non-gaussianity.
\newblock {\em Physical Review A}, 88(1):012322, 2013.

\bibitem{Niset2009}
J.~Niset, J.~Fiur\'a\ifmmode~\check{s}\else \v{s}\fi{}ek, and N.~J. Cerf.
\newblock No-go theorem for {G}aussian quantum error correction.
\newblock {\em Phys. Rev. Lett.}, 102:120501, Mar 2009.

\bibitem{Ourjoumtsev2009}
A.~Ourjoumtsev, F.~Ferreyrol, R.~Tualle-Brouri, and P.~Grangier.
\newblock Preparation of non-local superpositions of quasi-classical light
  states.
\newblock {\em Nature Physics}, 5(3):189--192, Mar 2009.

\bibitem{Pant2015}
M.~Pant, H.~Krovi, D.~Englund, and S.~Guha.
\newblock Rate-distance tradeoff and resource costs for all-optical quantum
  repeaters.
\newblock {\em Phys. Rev. A}, 95:012304, Jan 2017.

\bibitem{Pant2019}
M.~Pant, D.~Towsley, D.~Englund, and S.~Guha.
\newblock Percolation thresholds for photonic quantum computing.
\newblock {\em Nature Communications}, 10(1):1070, 2019.

\bibitem{Park2021}
J.~Park, J.~Lee, K.~Baek, and H.~Nha.
\newblock Quantifying non-{G}aussianity of a quantum state by the negative
  entropy of quadrature distributions.
\newblock {\em Phys. Rev. A}, 104:032415, Sep 2021.

\bibitem{Andrew2021}
A.~J. Pizzimenti, J.~M. Lukens, H.-H. Lu, N.~A. Peters, S.~Guha, and C.~N.
  Gagatsos.
\newblock Non-{G}aussian photonic state engineering with the quantum frequency
  processor.
\newblock {\em Phys. Rev. A}, 104:062437, Dec 2021.

\bibitem{Ra2017}
Y.-S. Ra, C.~Jacquard, A.~Dufour, C.~Fabre, and N.~Treps.
\newblock Tomography of a mode-tunable coherent single-photon subtractor.
\newblock {\em Phys. Rev. X}, 7:031012, Jul 2017.

\bibitem{Ralph2003}
T.~C. Ralph, A.~Gilchrist, G.~J. Milburn, W.~J. Munro, and S.~Glancy.
\newblock Quantum computation with optical coherent states.
\newblock {\em Phys. Rev. A}, 68(4):042319, Oct. 2003.

\bibitem{Rao1992-jo}
C.~R. Rao.
\newblock {\em Information and the Accuracy Attainable in the Estimation of
  Statistical Parameters}, pages 235--247.
\newblock Springer New York, New York, NY, 1992.

\bibitem{Seshadreesan2022}
K.~P. Seshadreesan, P.~Dhara, A.~Patil, L.~Jiang, and S.~Guha.
\newblock Coherent manipulation of graph states composed of finite-energy
  {Gottesman-Kitaev-Preskill-encoded} qubits.
\newblock {\em Phys. Rev. A}, 105(5):052416, May 2022.

\bibitem{Stam1959-tq}
A.~J. Stam.
\newblock Some inequalities satisfied by the quantities of information of
  {Fisher and Shannon}.
\newblock {\em Information and Control}, 2(2):101--112, June 1959.

\bibitem{Su2019}
D.~Su, C.~R. Myers, and K.~K. Sabapathy.
\newblock Conversion of {G}aussian states to non-{G}aussian states using
  photon-number-resolving detectors.
\newblock {\em Phys. Rev. A}, 100:052301, Nov 2019.

\bibitem{Zhuang2018}
R.~Takagi and Q.~Zhuang.
\newblock Convex resource theory of non-{G}aussianity.
\newblock {\em Phys. Rev. A}, 97:062337, Jun 2018.

\bibitem{Hiroki2008}
H.~Takahashi, K.~Wakui, S.~Suzuki, M.~Takeoka, K.~Hayasaka, A.~Furusawa, and
  M.~Sasaki.
\newblock Generation of large-amplitude coherent-state superposition via
  ancilla-assisted photon subtraction.
\newblock {\em Phys. Rev. Lett.}, 101:233605, Dec 2008.

\bibitem{Terhal2016}
B.~M. Terhal and D.~Weigand.
\newblock Encoding a qubit into a cavity mode in circuit {QED} using phase
  estimation.
\newblock {\em Phys. Rev. A}, 93(1):012315, Jan. 2016.

\bibitem{Brouri2009}
R.~Tualle-Brouri, A.~Ourjoumtsev, A.~Dantan, P.~Grangier, M.~Wubs, and A.~S.
  S\o{}rensen.
\newblock Multimode model for projective photon-counting measurements.
\newblock {\em Phys. Rev. A}, 80:013806, Jul 2009.

\bibitem{Tzitrin2020}
I.~Tzitrin, J.~E. Bourassa, N.~C. Menicucci, and K.~K. Sabapathy.
\newblock Progress towards practical qubit computation using approximate
  {Gottesman-Kitaev-Preskill} codes.
\newblock {\em Phys. Rev. A}, 101:032315, Mar 2020.

\bibitem{Sabapathy2020}
I.~Tzitrin, J.~E. Bourassa, N.~C. Menicucci, and K.~K. Sabapathy.
\newblock Research data for progress towards practical qubit computation using
  approximate {Gottesman-Kitaev-Preskill} codes, 2020.

\bibitem{Zach2021}
Z.~Van~Herstraeten and N.~J. Cerf.
\newblock Quantum {W}igner entropy.
\newblock {\em Phys. Rev. A}, 104:042211, Oct 2021.

\bibitem{Veitch2013}
V.~Veitch, N.~Wiebe, C.~Ferrie, and J.~Emerson.
\newblock Efficient simulation scheme for a class of quantum optics experiments
  with non-negative wigner representation.
\newblock {\em New Journal of Physics}, 15(1):013037, jan 2013.

\bibitem{Treps2017}
M.~Walschaers, C.~Fabre, V.~Parigi, and N.~Treps.
\newblock Statistical signatures of multimode single-photon-added and
  -subtracted states of light.
\newblock {\em Phys. Rev. A}, 96:053835, Nov 2017.

\bibitem{Treps2018}
M.~Walschaers, S.~Sarkar, V.~Parigi, and N.~Treps.
\newblock Tailoring non-{G}aussian continuous-variable graph states.
\newblock {\em Phys. Rev. Lett.}, 121:220501, Nov 2018.

\bibitem{Weedbrook2012}
C.~Weedbrook, S.~Pirandola, R.~Garc\'{\i}a-Patr\'on, N.~J. Cerf, T.~C. Ralph,
  J.~H. Shapiro, and S.~Lloyd.
\newblock Gaussian quantum information.
\newblock {\em Rev. Mod. Phys.}, 84:621--669, May 2012.

\bibitem{Yamasaki2020}
H.~Yamasaki, T.~Matsuura, and M.~Koashi.
\newblock Cost-reduced all-{G}aussian universality with the
  {Gottesman-Kitaev-Preskill} code: Resource-theoretic approach to cost
  analysis.
\newblock {\em Phys. Rev. Research}, 2(2):023270, June 2020.

\bibitem{zyczkowski1998volume}
K.~{\.Z}yczkowski, P.~Horodecki, A.~Sanpera, and M.~Lewenstein.
\newblock Volume of the set of separable states.
\newblock {\em Physical Review A}, 58(2):883, 1998.

\bibitem{zyczkowski1994random}
K.~Zyczkowski and M.~Kus.
\newblock Random unitary matrices.
\newblock {\em Journal of Physics A: Mathematical and General}, 27(12):4235,
  1994.

\end{thebibliography}

\end{document}